\newtheorem{Example}{Example}
\definecolor{subsectioncolor}{rgb}{0,0.541,0.855}
\newtheorem{theorem}{Theorem}
\newtheorem{lemma}[theorem]{Lemma}
\newtheorem{proposition}[theorem]{Proposition}
\newtheorem{assumption}[theorem]{Assumption}
\newcommand{\nat}{\mathbb{N}}
\title{On Certificates for Almost Sure Reachability \\ in Stochastic Systems}
\author{Arash Bahari Kordabad, Rupak Majumdar, Harshit Jitendra Motwani, and Sadegh Soudjani \thanks{The authors are affiliated with the Max Planck Institute for Software Systems, Kaiserslautern, Germany. E-mail: {\tt\small\{arashbk, rupak, hmotwani, sadegh\}@mpi-sws.org.}}}
\begin{document}
\maketitle
\begin{abstract}
Almost sure reachability refers to the property of a stochastic system whereby, from any initial condition, the system state reaches a given target set with probability one. In this paper, we study the problem of certifying almost sure reachability in discrete-time stochastic systems using drift and variant conditions. While these conditions are both necessary and sufficient in theory, computational  approaches often rely on restricting the search to fixed templates, such as polynomial or quadratic functions. We show that this restriction compromises completeness: there exists a polynomial system for which a given target set is almost surely reachable but admits no polynomial certificate, and a linear system for which a neighborhood of the origin is almost surely reachable but admits no quadratic certificate. We then provide a complete characterization of reachability certificates for linear systems with additive noise. Our analysis yields conditions on the system matrices under which valid certificates exist, and shows how the structure and dimension of the system determine the need for non-quadratic templates. Our results generalize the classical random walk behavior to a broader class of stochastic dynamical systems.
\end{abstract}
\begin{IEEEkeywords}
Almost Sure Reachability, Polynomial Stochastic Systems, Random Walk, Linear Systems, Template-based Certificates
\end{IEEEkeywords}
\IEEEpeerreviewmaketitle
\section{Introduction}
\label{sec:intro}
The reachability problem, determining whether a dynamical system's state eventually enters a desired target set, is a foundational question across multiple disciplines such as control theory~\cite{lygeros2004reachability,bujorianu2012stochastic} and formal verification~\cite{baier2008principles}. While classical methods for deterministic systems provide strong guarantees using tools such as Lyapunov functions and barrier certificates~\cite{blanchini2008set, ames2019control}, the introduction of randomness makes the analysis significantly more challenging~\cite{prajna2007framework, lavaei2022automated}. In stochastic settings, probabilistic versions of tools such as barrier certificates have been developed to provide safety and reachability guarantees in a probabilistic sense~\cite{kordabad2024control,santoyo2021barrier}. However, reachability guarantees that hold \emph{with probability one}, a property known as \emph{almost sure reachability}~\cite{chatterjee2016optimal}, are typically more desirable.

Almost sure reachability is a critical property in many domains, from verifying the reliability of stochastic control systems~\cite{jafarpour2025probabilistic} to analyzing the termination of probabilistic programs~\cite{mciver2017new,abate2021learning}. In computer science, this property is closely related to \emph{almost sure termination}, where one asks whether a probabilistic program will eventually terminate with probability one for all initial configurations. This question is especially relevant for verifying probabilistic programs with nondeterminism. Reachability analysis has been studied in the context of termination of probabilistic programs, e.g., in~\cite{ferrer2015probabilistic}, and necessary and sufficient proof rules have recently been developed~\cite{majumdar2025sound}.

 Recent theoretical work~\cite{majumdar2024necessary} has introduced a necessary and sufficient condition for certifying almost sure reachability in discrete-time stochastic systems. This framework involves the construction of two functions: 1) a \emph{drift function}, that is, a radially unbounded function that does not increase in expectation on every state transition outside of a compact set, and 2) a \emph{variant function}, which is positive outside the target set and decreases on each transition by a minimal amount with a minimal probability within each sublevel set of the drift. In addition, the variant is bounded above within each sublevel set of the drift. In~\cite{majumdar2024necessary}, it has been shown that the presence of a drift and a variant is necessary and sufficient for the target set to be reached almost surely.

While the drift and variant framework provides a complete theoretical characterization of almost sure reachability, it remains unclear when such certificates exist within specific function classes. In particular, we show that restricting the search to fixed templates can result in the loss of completeness. We construct a polynomial system that almost surely reaches a given target set but does not admit any polynomial certificate, as well as a linear system that almost surely reaches a neighborhood of the origin yet lacks a quadratic certificate.

\textbf{Contribution.} Considering the limitations mentioned above, we turn to linear systems with additive stochastic disturbances and provide a complete characterization of when almost sure reachability holds e.g., based on spectral radius of the system matrices. Similar spectral conditions have appeared in the stability literature~\cite{edition1999linear}. More precisely, we identify conditions, based on the spectral radius, Jordan structure, number of modulus one, and excitation directions from the noise, under which valid certificates exist and when non-quadratic forms become necessary. This provides a generalization of classical random walk behavior to a broader class of stochastic dynamical systems, including those with structured dynamics and non-identically distributed noise.

The main contributions of this paper are as follows:  
\begin{enumerate}[label=(\roman*)]
    \item We show that restricting drift or variant functions to fixed templates, such as polynomial or quadratic forms, may break completeness even for simple examples.
    
    \item We provide a complete characterization of almost sure reachability for linear systems with additive noise based on the system matrices, identifying when valid certificates exist and when non-quadratic forms are required. Moreover, we explicitly construct non-quadratic certificates based on logarithmic functions.
    \item We generalize the results on the classical random walk to a broader class of stochastic dynamical systems.
\end{enumerate}

Our results expose both fundamental limitations of template-based certificate synthesis and provide precise structural conditions under which reachability certificates exist for linear systems. By connecting properties of linear systems to the existence of suitable certificates, the paper contributes a unified theoretical framework for reasoning about almost sure reachability in linear stochastic systems.

\textbf{Outline.} The rest of the paper is organized as follows. Section~\ref{sec:preliminaries} introduces the preliminaries and background on almost sure reachability. In Section~\ref{sec:polynomial-systems}, we present counterexamples showing that restricting certificates to polynomial or quadratic forms may lead to the loss of completeness. Section~\ref{sec:linear-system} focuses on characterizing almost sure reachability for linear systems with corresponding certificates. Finally, Section~\ref{sec:conclusion} concludes the paper and outlines directions for future work.

\textbf{Notation.}  
We denote the set of real, non-negative real and positive real numbers, by $\mathbb{R}$, $\mathbb{R}_{\geq 0}$ and $\mathbb{R}_{>0}$, respectively. The set of non-negative (positive) integers by $\mathbb{N}_{\geq 0}$ ($\mathbb{N}$), and the set of complex numbers by $\mathbb{C}$. The Euclidean norm of a vector $x \in \mathbb{R}^n$ is denoted by $\|x\|$ and the Frobenius norm is denoted by $\|x\|_F$. A square symmetric matrix $M$ is positive (semi-) definite when $x^\top M x>0$ ($x^\top M x\geq 0$) for all $x\neq 0$ and it is denoted by $M \succ 0$ ($M \succeq 0$). The expectation operator is denoted by $\mathbb{E}[\cdot]$, and $\mathbb{P}(\cdot)$ denotes probability. For random vectors $x$ and $y$, the covariance is defined as $ \operatorname{Cov}(x, y) := \mathbb{E}[(x - \mathbb{E}[x])(y - \mathbb{E}[y])^\top]$ and the variance is $\operatorname{Var}(x) := \operatorname{Cov}(x, x)$.
The indicator function of a set $S$ is denoted by $\mathbf{1}_S(x)$, that is $\mathbf{1}_S(x)=1$ if $x\in S$ and $\mathbf{1}_S(x)=0$ otherwise. The dimension of a vector space is denoted by $\dim(\cdot)$. The spectral radius of a square matrix $A$ is denoted by $\rho(A)$ and it is the maximum of the absolute values of its eigenvalues. A set \( S \subseteq \mathbb{R}^n \) is called \emph{open} if for every point \( x \in S \), there exists a neighborhood of \( x \) entirely contained in \( S \). A set is \emph{closed} if its complement is open. A set $S \subset \mathbb{R}^n$ is \emph{bounded} if there exists an $M\in \mathbb{R}_{\geq 0}$ such that $\|x\| \leq M$ for all $x \in S$. A set is said to be \emph{compact} if it is closed and bounded.

\section{Preliminaries and Problem Statement}
\label{sec:preliminaries}
We consider a discrete-time polynomial stochastic system (dt-PSS) as a tuple $\Sigma=(X,W,w, f)$, where $X\subseteq\mathbb{R}^n$ is the state space of the system,
$W\subseteq\mathbb{R}^m$ is the uncertainty space,
$w:=\{w_k:\Omega\rightarrow W,k\in\nat_{\geq 0}\}$ is a sequence of independent and identically distributed (i.i.d.) random variables defined on sample space $\Omega$. Each $w_k$ takes values in $W$ and is sampled according to a probability measure $\mathbb{P}_w$,
and the map $f:X\times W\rightarrow X$ is a polynomial function in both arguments that characterizes the state evolution of the system according to:
	\begin{equation}
	x_{k+1}=f(x_k,w_k), \quad x_0: \text{initial condition}
	\label{eq:state_evolution}
	\end{equation}
where $x_k\in X$ is the state at time $k\in\nat_{\geq 0}$, and $w_k\in W$ is a random variable of zero mean with finite covariance matrix $\Sigma_w \succ 0$, and its (possibly unbounded) support contains an open ball centered at the origin. A special case of~\eqref{eq:state_evolution} is a linear system:
\begin{equation}\label{eq:linear_system}
    f(x_k,w_k) = A x_k +B w_k,
\end{equation}
where $A \in \mathbb{R}^{n \times n}$ and $B \in \mathbb{R}^{n \times m}$ are constant system matrices.

\textbf{Almost Sure Reachability Problem.}
\label{prob:asr}
Given a dt-PSS $\Sigma$ and a bounded open target set $G \subseteq X$, determine whether for all initial conditions $x_0 \in X$, the system trajectory $\{x_k\}_{k=0}^\infty$ hits the target set $G$ for almost all possible noise realizations $\{w_k\}^\infty_{k=0}$, i.e., 
\begin{equation}\label{eq:reach}
    \forall x_0\in X,\quad \mathbb{P}\left( \exists\, k \in \mathbb{N}_{k\geq 0} : x_k \in G  \right) = 1.
\end{equation} 
That is, does the system reach the target set $G$ almost surely from every initial condition $x_0$?

To characterize almost sure reachability, we follow the framework proposed in~\cite{majumdar2024necessary}, which provides a pair of conditions that are both necessary and sufficient for ensuring that the trajectories reach a given target set \( G \) with probability one. 
The theorem in~\cite{majumdar2024necessary} applies to more general discrete-time stochastic systems that are \emph{weak Feller}---we note that every dt-PSS is weak Feller.
We recall these criteria next.

\noindent\textbf{V1: Drift Criterion.} There exists a drift function $V:X\rightarrow\mathbb{R}_{\geq 0}$ with $\lim_{\|x\|\rightarrow\infty}V(x)=\infty$ and a compact set $C\subseteq X$ satisfying,
\begin{equation}
\label{eq:drift}
    \Delta V(x) :=\mathbb{E}\left[ V(f(x, w))\,|\,x \right] - V(x) \le 0,\quad \forall x\in C^c,
\end{equation}
where $C^c$ is the complement of $C$.

\noindent\textbf{V2: Variant Criterion.} For a function $V$ satisfying the drift criterion \textbf{V1}, there exists a function $U : X \to \mathbb{R}$ called the \emph{variant}, and three supporting functions $H: \mathbb{R}_{>0} \to \mathbb{R}$, $\delta : \mathbb{R}_{>0} \to \mathbb{R}_{>0}$, and $\epsilon : \mathbb{R}_{>0} \to \mathbb{R}_{>0}$ such that for all $r\in \mathbb{R}_{>0}$ and $x \in X$, the implication 
$
    V(x) \leq r \implies U(x) \leq H(r)
$
holds, and 
$$
\mathbb P(U(f(x,w))-U(x)\le -\delta(r))\ge \epsilon(r),
$$
for all $x$ satisfying $V(x) \leq r$ and $U(x)> 0$.

\begin{theorem}[Almost sure reachability~\cite{majumdar2024necessary}]
\label{thm:nece_suff}
    For a dt-PSS $\Sigma$ and open bounded target set $G$, if there exists a function $V$ satisfying criterion \textbf{V1} and a variant $U$ satisfying criterion \textbf{V2} associated with $V$, such that
    \begin{equation}\label{eq:G:U}
     G \supset \{x \in X \mid U(x) \leq 0\},  
    \end{equation}
then~\eqref{eq:reach} holds. Moreover, if \eqref{eq:reach} holds, then there are functions $V$ and $U$ that satisfy \textbf{V1} and \textbf{V2}, and such that~\eqref{eq:G:U} holds.
\end{theorem}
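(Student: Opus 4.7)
The plan is to prove the two directions separately. For the sufficiency direction, assume $V$ satisfies \textbf{V1}, $U$ satisfies \textbf{V2} with supporting functions $H,\delta,\epsilon$, and $G \supset \{U\le 0\}$. The first step is to exploit the drift condition: non-negativity and radial unboundedness of $V$, together with $\Delta V \le 0$ on $C^c$ and a uniform bound on $V$ over the compact set $C$, make $\{V(x_k)\}$ behave as a non-negative supermartingale once outside $C$. A Foster--Lyapunov argument then yields that the trajectory returns to some sublevel set $S_r := \{V\le r\}$ infinitely often almost surely, with $r$ chosen large enough to cover $C$ together with the one-step expected value of $V$ starting from $C$. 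The second step is to use \textbf{V2}: inside $S_r$ one has $U\le H(r)$, and whenever $U(x)>0$ and $x\in S_r$ there is probability at least $\epsilon(r)$ of a drop in $U$ by at least $\delta(r)$, so at most $\lceil H(r)/\delta(r)\rceil$ successful drops suffice to push $U$ non-positive. A conditional Borel--Cantelli argument along the infinitely many excursions to $S_r$ then shows $U(x_k)\le 0$ for some finite $k$ almost surely, and the containment~\eqref{eq:G:U} gives $x_k\in G$, establishing~\eqref{eq:reach}.

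For the necessity direction, assume~\eqref{eq:reach} and build $V$ and $U$ explicitly. A natural candidate for $U$ is a functional of the hitting time $\tau_G := \inf\{k\ge 0 : x_k\in G\}$, for instance $U(x) = \mathbb{E}_x[h(\tau_G)]$ for a non-negative increasing $h$ with $h(0)\le 0$ (e.g.\ $h(k) = 1-(1-\alpha)^k$ with $\alpha$ small); almost sure finiteness of $\tau_G$ makes $U$ finite, the set $\{U\le 0\}$ equals $G$ by construction, and the uniform constants $\delta(r),\epsilon(r)$ on each sublevel arise from tightness of the one-step transitions. A natural candidate for $V$ is a coercive function of the state, e.g.\ $\|x\|^2$, possibly corrected by an expected-return-time term to enforce~\eqref{eq:drift} outside a compact set. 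The weak Feller property of a dt-PSS provides the measurability and continuity needed for the resulting $V,U$ to satisfy \textbf{V1}, \textbf{V2} pointwise, and the sublevel bound $U\le H(r)$ on $\{V\le r\}$ follows from compactness of $\{V\le r\}$ together with the upper semicontinuity of $U$.

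The main obstacle is the necessity direction, specifically producing a single pair $(V,U)$ that simultaneously satisfies \textbf{V1}, \textbf{V2}, and $G\supset\{U\le 0\}$ with well-defined constants $H(r),\delta(r),\epsilon(r)$. The uniform one-step decrease probability $\epsilon(r)$ and drop $\delta(r)$ on each $S_r$ require combining tightness of the kernel with the Markov property, and the coupling between the choice of $V$ and the sublevel bound $H(r)$ on $U$ means the two functions cannot be engineered independently. I expect the construction to proceed by first fixing a coercive $V$ whose sublevel sets are absorbing in the recurrence sense, then tailoring $U$ via hitting-time statistics restricted to each $\{V\le r\}$, and finally verifying the uniform constants on each slice by exploiting compactness together with the weak Feller regularity of dt-PSS.
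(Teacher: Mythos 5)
First, a point of reference: the paper does not prove Theorem~\ref{thm:nece_suff} itself --- it is imported from~\cite{majumdar2024necessary}, so the only proof to compare against is the one in that reference. Your outline points at the right machinery (drift $\Rightarrow$ non-evanescence and recurrence to sublevel sets of $V$; variant $\Rightarrow$ progress with probability bounded below; a conditional Borel--Cantelli/zero--one argument; a converse Foster--Lyapunov construction for necessity), but two of your steps do not go through as written. For sufficiency, \textbf{V1} only gives $\Delta V\le 0$ outside $C$, not a strict decrease, so it does not yield returns to a \emph{pre-chosen} sublevel set $S_r$ ``large enough to cover $C$''; what it yields (via the supermartingale argument behind Theorem 9.4.1 of~\cite{meyn2012markov}) is non-evanescence, and the level $r$ is random, to be handled by a countable union over $r\in\mathbb{N}$. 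More seriously, the bookkeeping ``at most $\lceil H(r)/\delta(r)\rceil$ successful drops suffice'' is exactly the crux and is not justified: $U$ is not a supermartingale, it may increase arbitrarily between successful one-step drops, and during a putative run of consecutive drops the state may leave $S_r$, at which point the \textbf{V2} guarantee with the constants $\delta(r),\epsilon(r)$ no longer applies. The argument must lower-bound, from each visit to $S_r\cap\{U>0\}$, the probability of a whole block of consecutive decreases of length tied to $H(r)/\delta(r)$ while simultaneously controlling the excursion of $V$ during that block, and only then invoke conditional Borel--Cantelli along the infinitely many visits. Your sketch names the tools but omits this step.

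For necessity, the proposed drift candidate $V(x)=\|x\|^2$ ``possibly corrected'' is refuted by this very paper: the second example (the one-dimensional random walk, $A=B=1$) reaches $G$ almost surely yet admits \emph{no} quadratic drift function, so a coercive quadratic template cannot be the starting point of a general converse construction. The actual route is: almost sure reachability from every initial condition plus the Markov property implies the bounded set $G$ is visited infinitely often a.s., hence the chain is non-evanescent, and the converse drift construction for non-evanescent weak Feller chains (Meyn--Tweedie theory) supplies $V$; the variant $U$ is then built from hitting-time/hitting-probability data of $G$, with the uniform $\delta(r),\epsilon(r)$ and the bound $H(r)$ established on each slice $\{V\le r\}\cap\{U>0\}$ by an explicit argument --- asserting that they ``arise from tightness'' is precisely the part that needs proof. (A minor further point: with $U(x)=\mathbb{E}_x[h(\tau_G)]$ and $h(0)=0$ one gets $\{U\le 0\}=G$, so if the containment in~\eqref{eq:G:U} is meant strictly you must shrink the zero-sublevel set slightly.) So the proposal is a reasonable roadmap, but both the block-decrease estimate in the sufficiency direction and the construction of $V$ in the necessity direction are genuine gaps rather than routine details.
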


Conditions \textbf{V1} and \textbf{V2} play complementary roles in ensuring almost sure reachability. The drift condition \textbf{V1} guarantees that system trajectories do not diverge to infinity with positive probability; instead, they remain bounded almost surely. This is achieved by enforcing a supermartingale-like decrease of the function \( V \) outside a compact set \( C \), which can be arbitrarily large. The variant condition \textbf{V2} ensures that there is enough stochastic excitation to make progress toward the target set \( G \) with a strictly positive probability. Note that checking criteria \textbf{V1} and \textbf{V2} for given \( V \) and \( U \) does not need the probability measure on the infinite state trajectory and can be verified using only the one-step transition dynamics of the system and the distribution of the disturbance \( w \). However, this result provides only an existential guarantee, it does not provide guidance on how to construct such functions, nor does it indicate the class of functions in which to search. In the next section, we address this question by examining the limitations of template-based certificates. Specifically, we construct systems that almost surely reach a target set but do not admit any polynomial or quadratic function that satisfies the drift condition. These counterexamples reveal that restricting the search space of certificates can lead to incompleteness, motivating the need for deeper structural analysis.

\section{The Limitations of Polynomial Certificates}
\label{sec:polynomial-systems}

While the drift and variant conditions presented in the previous section provide a necessary and sufficient condition for verifying almost sure reachability, a natural question arises in practice: can we restrict our attention to a specific function class—e.g., polynomial or quadratic functions—and still retain the necessity direction of the result? This would be highly desirable in practice, enabling tractable synthesis via tools such as sum-of-squares optimization and semidefinite programming.

In the following example, we construct a dt-PSS for which almost sure reachability to a bounded open target set holds, yet no polynomial function satisfies the drift condition~\eqref{eq:drift}. This example is inspired by the continuous-time deterministic dynamics in~\cite{ahmadi2011globally}, which showed that a polynomial system could be globally stable without admitting a polynomial Lyapunov function. Our construction adapts that example, with some modifications, to a discrete-time setting and incorporates stochastic dynamics. While randomness is not essential for the counterexample, it reflects the stochastic nature of the systems studied in this paper. In contrast to the continuous-time analysis, which uses a gradient-based Lyapunov condition, our contradiction arises from the failure of any polynomial function to satisfy the drift inequality in \textbf{V1}. These changes make the example both aligned with the objectives of this work and sufficiently distinct, highlighting that the limitations of polynomial certificates persist even in stochastic, discrete-time settings under almost sure reachability.

\begin{Example} Consider the following system:
\begin{equation}\label{eq:examp1}
f(x_k, w_k) := 
\begin{bmatrix}
\frac{1}{2} \, \xi_k (1 + \eta_k + w_k) \\
\frac{1}{2} \, \eta_k
\end{bmatrix}, \quad x_k:=  \begin{bmatrix}
 \xi_k  \\
\eta_k
\end{bmatrix},
\end{equation}
where $w_k$ is uniformly distributed in the interval $\left[-{1}, {1}\right]$ and $\xi_k, \eta_k> 0$. One can verify that drift and variant functions of the form
\begin{equation*}
V(\xi,\eta)=U(\xi,\eta)+2=\ln(1+\xi)+\eta^2,
\end{equation*}
satisfy \textbf{V1} and \textbf{V2} for a compact set $C$ and bounded open target set $G=\{(\xi,\eta)\in \mathbb{R}^2\,|\,0<\xi, \eta< 1\}$. Therefore, from Theorem~\ref{thm:nece_suff}, the system almost surely reaches $G$. 

We show that there is no polynomial drift function for this system. The initial state is taken as $[\xi_0,\eta_0]=[2^{i},2^{i}u]$ with a fixed $u \geq 1$ and an integer $i\in \mathbb{N}$. Note that this particular class of initial conditions is chosen to simplify the computations and to make the contradiction argument for the nonexistence of a polynomial drift function more transparent. The system trajectories with initial conditions $[\xi_0,\eta_0]=[2^{i},2^{i}u]$ admit a closed-form solution as
\begin{equation*}
\xi_k=2^{i} \prod_{n=0}^{k-1} \frac{1}{2}\left(1+u 2^{i}\left(\frac{1}{2}\right)^n+w_n\right),\quad
\eta_k=u 2^{i}\left(\frac{1}{2}\right)^k.
\end{equation*}
We now consider a specific time step and derive bounds for the state at that time. The time it takes for the trajectory to cross the line $\eta = u\geq 1$ is denoted by $k^\star$ and is obtained as
$\eta_{k^\star}=u 2^{i}\left(\frac{1}{2}\right)^{k^\star}= u
$  which implies $k^\star=i$.
Next, we derive the bounds on $\xi_{k^\star}$ at this crossing time. Substituting $k^\star=i$ into the expression for $\xi_{k^\star}$ we obtain the following bounds:
\begin{align*}
    \xi_{k^\star} \ge & \prod_{n=0}^{i-1} u 2^{i}\left(\frac{1}{2}\right)^n= u^{i} 2^{\frac{1}{2}i(i+1)},\\
     \xi_{k^\star} \leq & \prod_{n=0}^{i-1} \left( 2+ u 2^{i}\left(\frac{1}{2}\right)^n\right) \leq \prod_{n=2}^{i+1}  u 2^{n} = u^{i} 2^{\frac{1}{2}i(i+3)}.
\end{align*}
We now show by contradiction that no polynomial \( V \) satisfies criterion \textbf{V1}. Suppose that there exists a polynomial \( V \) that satisfies \textbf{V1}. Note that since $u$ is an arbitrary fixed parameter, we can choose it such that $(\xi_k,\eta_k)\in C^c$ for all $k\leq k^\star$, e.g., $u>\sup_{x\in C}\|x\|$, since $\eta_k$ is decreasing. Then we have,
\begin{align}
   &\mathbb{E} [V(\xi_{k+1},\eta_{k+1})] - V(\xi_k,\eta_k) \le 0,\quad \forall (\xi_k,\eta_k)\in C^c  \nonumber\\
   \Rightarrow\,\, &\mathbb{E} [V(\xi_{k+1},\eta_{k+1})] - V(\xi_k,\eta_k) \le 0,\quad \forall k\le k^\star \nonumber\\
   \Rightarrow\,\, & \mathbb{E} [V(\xi_{k^\star},\eta_{k^\star})] - V(\xi_0,\eta_0) \le 0,\nonumber\\
   \Rightarrow\,\, & \min_{\xi_{k^\star}\in[ u^{i} 2^{\frac{1}{2}i(i+1)} ,u^{i} 2^{\frac{1}{2}i(i+3)}]} V(\xi_{k^\star},u)\le  V(2^{i} ,2^{i} u),\label{eq:final_inequality0}
\end{align}
because the minimum of a function is not greater than its expectation. Any positive polynomial $V(\xi,\eta)$, with $\lim_{\|[\xi,\eta]\|\rightarrow\infty}V(\xi,\eta)=\infty$, is non-decreasing for all $\xi\geq L$ for a sufficiently large $L$ and fixed $\eta$. Then for any $i$ satisfying $u^{i} 2^{\frac{1}{2}i(i+1)}>L$, we have,
\begin{equation*}
V(u^{i} 2^{\frac{1}{2}i(i+1)},u) \leq \min_{\xi_{k^\star}\in [u^{i} 2^{\frac{1}{2}i(i+1)} ,u^{i} 2^{\frac{1}{2}i(i+3)}]} V(\xi_{k^\star},u).
\end{equation*}
Combining this with~\eqref{eq:final_inequality0} yields:
\begin{equation}\label{eq:final_inequality}
    V(u^{i} 2^{\frac{1}{2}i(i+1)},u) \leq  V(2^{i} ,2^{i} u).
\end{equation}
Note that $\lim_{\|[\xi,\eta]\|\rightarrow\infty}V(\xi,\eta)=\infty$, hence it is not a constant function of $\xi$. Take the polynomial $V$ of finite degree $d$, expressed as $V(\xi,\eta)=\sum_{\ell=0}^d \sum_{j=0}^{d-\ell} a_{\ell j} \xi^{\ell} \eta^{j},$ with at least one $a_{\ell j}\neq 0$ for some $\ell>0$. Then \eqref{eq:final_inequality} reads as
\begin{equation}\label{eq:ineq:pol:exm}
\sum_{\ell=0}^d \sum_{j=0}^{d-\ell} a_{\ell j} \left(u^{i} 2^{\frac{1}{2}i(i+1)}\right)^{\ell} u^{j} 
\leq 
\sum_{\ell=0}^d \sum_{j=0}^{d-\ell} a_{\ell j}  \left(2^{i}\right)^{\ell} \left(2^{i} u\right)^{j}. 
\end{equation}
Dividing both sides of the above inequality by $2^{i(d+1)}$ and taking the limit of both sides as $i\rightarrow\infty$, we obtain
\begin{subequations}\label{eq:lim:pol:exm}
  \begin{align}
&\lim_{i\rightarrow\infty} \sum_{\ell=0}^d \sum_{j=0}^{d-\ell} a_{\ell j} u^{i\ell+j} 2^{\frac{1}{2}i(i+1)\ell -i(d+1)}=\infty
 \\
&\lim_{i\rightarrow\infty} \sum_{\ell=0}^d \sum_{j=0}^{d-\ell} a_{\ell j} u^{j}  2^{i(\ell+j-d-1)}=0,  
\end{align}  
\end{subequations}
since $\ell+j\leq d$. The contradiction between~\eqref{eq:ineq:pol:exm} and \eqref{eq:lim:pol:exm} establishes that no polynomial $V$ satisfies criterion \textbf{V1}. 
\end{Example} 
\smallskip

We now turn to a second example: a \emph{one-dimensional random walk}, which, despite being governed by linear dynamics and almost surely reaching a neighborhood of the origin, does \emph{not} admit a quadratic drift function satisfying criterion~\textbf{V1}.

\begin{Example} We consider~\eqref{eq:linear_system} with $A=B=1$ and $X=\mathbb{R}$, where $w_k$ is uniformly distributed in the interval $\left[-{1}, {1}\right]$ and the target set being $G=\{x\in\mathbb{R}\,|\,|x|<2\}$. It is well known that this system reaches any ball around the origin almost surely (see e.g., Chung Fuchs Theorem~\cite{chung1951distribution}). To verify this using Theorem~\ref{thm:nece_suff}, we can show that the drift and variant functions in the form of
\[
V(x) = |x|,\qquad U(x)=|x|-1,
\]
satisfy \textbf{V1} and \textbf{V2} with $C=[-1,1]$, $H(r)=r-1$, and constants $\delta\in(0,1)$ and $\epsilon={(1-\delta)}/{2}$. However, suppose we restrict the drift function to be quadratic, \(V(x) = ax^2+bx+c\). Note that $a>0$, because $V(x)$ must be nonnegative for large enough $|x|$.
Then
\begin{align*}
    \Delta V(x) = &\mathbb{E}[a(x + w)^2+b(x+w)+c] - (ax^2+bx+c) \\=&a \mathbb{E}[w^2] + (2xa+b) \mathbb{E}[w] = \frac{a}{3},
\end{align*}
since \(\mathbb{E}[w] = 0\) and \(\mathbb{E}[w^2] = \frac{1}{3}\). Thus, \(\Delta V(x) > 0\) for all \(x \neq 0\), violating the drift condition \textbf{V1}.
\end{Example}
\smallskip 
This demonstrates that no quadratic drift function can certify almost sure reachability in this example. This example underscores a key limitation of template-based methods: even simple linear systems may lack quadratic certificates, and completeness can be lost when restricting to a specific class of functions.

These examples illustrate that restricting certificate templates, such as to polynomials or quadratics, can lead to incompleteness, even for simple systems with almost sure reachability. We therefore shift our focus from assuming specific function classes to analyzing the structural properties of the system itself. In particular, we study linear systems with additive noise and investigate under what conditions these systems admit valid certificates satisfying the drift and variant conditions. Rather than assuming that a quadratic or other template-based function will suffice, we provide a principled analysis based on the properties of the system. This approach allows us to identify when quadratic certificates can be constructed and when alternative, non-polynomial forms—such as logarithmic functions—are necessary. Our goal is to characterize almost sure reachability of linear systems with additive noise to a bounded open target set $G$ containing the origin and identifying conditions under which valid certificates can be constructed. Flowchart~\ref{fig:flowchart} illustrates the different cases of almost sure reachability depending on the properties of the linear system.

\begin{figure}[t]
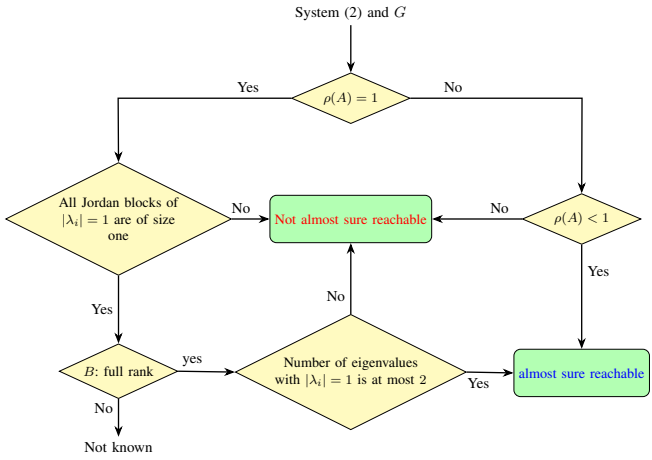

\hspace{0cm}
 \includestandalone[width=0.48 \textwidth]{flowchart} 
  \caption{Flowchart illustrating conditions for almost sure reachability of stochastic linear systems with additive noise based on the spectral radius and Jordan structure of $A$, as well as rank of $B$.}
  \label{fig:flowchart}
\end{figure}

\section{Almost Sure  Reachability for Linear Systems}
\label{sec:linear-system}
In this section, we present a systematic analysis of almost sure reachability for stochastic linear systems in the form of~\eqref{eq:linear_system}. 
We provide a characterization for almost sure reachability in terms of the spectral radius, Jordan structure, and unit-modulus eigenvalues of $A$, as well as rank of $B$ based on Theorem~\ref{thm:nece_suff}. First, we present the results for $\rho(A)\neq 1$ in the following theorem.

\begin{theorem}\label{theorem:linear}
Almost sure reachability for the linear system~\eqref{eq:linear_system} holds for any bounded open target set containing the origin, and quadratic drift and variant functions satisfying \textbf{V1} and \textbf{V2} exist if $\rho(A)<1$. Moreover, the system does not almost surely reach any bounded open target set if $\rho(A)>1$.
\end{theorem}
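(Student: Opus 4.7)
The plan is to handle the two regimes of $\rho(A)$ separately, constructing explicit quadratic certificates in the stable case and exhibiting, in the unstable case, initial conditions whose trajectories escape any bounded target with positive probability. For $\rho(A)<1$ I would solve the discrete Lyapunov equation $A^\top P A - P = -I$ for its unique positive definite solution $P$ and take $V(x)=x^\top P x$, $U(x)=V(x)-c$ for a small $c>0$. A direct expansion using $\mathbb{E}[w]=0$ and $\operatorname{Var}(w)=\Sigma_w$ yields $\Delta V(x)=-\|x\|^2+\operatorname{tr}(B^\top P B\,\Sigma_w)$, which is nonpositive on the complement of the compact ellipsoid $C=\{x:\|x\|^2\le \operatorname{tr}(B^\top P B\,\Sigma_w)\}$, so \textbf{V1} holds. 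Because $G$ is open and contains the origin, $c$ can be chosen so that $\{x:x^\top P x\le c\}\subset G$, giving~\eqref{eq:G:U}, and $H(r):=r-c$ then controls $U$ within the sublevel sets of $V$.

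The probabilistic decrease in \textbf{V2} is the step that requires most care. On the annular set $\{x:c<V(x)\le r\}$ the deterministic part $-\|x\|^2$ is bounded above by some negative constant $-\alpha(r,c)$, and the remaining cross and quadratic noise terms $2x^\top A^\top P B w + w^\top B^\top P B w$ are controlled by a polynomial in $\|w\|$ whose coefficients depend only on $r$. Since the support of $\mathbb{P}_w$ contains an open ball around the origin, I would pick $\eta(r)>0$ small enough that these perturbations do not exceed $\alpha(r,c)/2$ whenever $\|w\|\le\eta(r)$; setting $\delta(r)=\alpha(r,c)/2$ and $\epsilon(r)=\mathbb{P}_w(\|w\|\le\eta(r))>0$ then verifies \textbf{V2}, and Theorem~\ref{thm:nece_suff} delivers almost sure reachability.

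For $\rho(A)>1$, fix any bounded open $G$ and let $v\in\mathbb{C}^n$ be a left eigenvector of $A$ with eigenvalue $\lambda$, $|\lambda|>1$. The scalar projection $z_k:=v^\top x_k$ satisfies $z_{k+1}=\lambda z_k+v^\top B w_k$, so the rescaled process $y_k:=z_k/\lambda^k$ is a (complex-valued) $L^2$-bounded martingale whose total increment variance is a finite multiple of $\sum_{j\ge 0}|\lambda|^{-2(j+1)}$. Doob's maximal inequality then gives $\mathbb{P}(\sup_k|y_k-z_0|\ge|z_0|/2)\le 4\operatorname{Var}(y_\infty)/|z_0|^2$, which is strictly less than one for $|z_0|$ large; on the complementary positive-probability event one has $|z_k|=|\lambda|^k|y_k|\ge|z_0|/2$ for every $k$, and since $|v^\top x|$ is uniformly bounded by some $M$ on $G$, choosing $x_0$ with $|v^\top x_0|>2M$ ensures $x_k\notin G$ for all $k$, contradicting~\eqref{eq:reach}.

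The principal obstacle is the \textbf{V2} verification for the stable case: upgrading a supermartingale-type estimate from expectation to a per-step decrease with uniform positive probability is exactly where the open-ball assumption on the noise support must be invoked, and it forces the quantitative bookkeeping of $\alpha(r,c)$ and $\eta(r)$ described above. A minor subtlety in the unstable case is the handling of complex eigenvalues, which I absorb into the complex $L^2$ martingale formulation so that no ad hoc real Jordan decomposition is needed.
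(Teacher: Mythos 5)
Your proposal is correct. For $\rho(A)<1$ you follow essentially the same route as the paper: solve the discrete Lyapunov equation, take the quadratic $V(x)=x^\top P x$ and $U(x)=V(x)-c$ with $\{x: x^\top P x\le c\}\subset G$, and get \textbf{V1} from $\Delta V(x)=-\|x\|^2+\operatorname{tr}(B^\top P B\,\Sigma_w)$ outside a ball. The only difference is in \textbf{V2}: the paper derives a single uniform $\delta$ from weighted-norm bounds ($r_0=\|A\|_Q^2<1$ and a small-noise event in the $Q$-norm), while you bound the cross and quadratic noise terms on the annulus $\{c<V(x)\le r\}$ and take $r$-dependent $\delta(r),\epsilon(r)$, which is perfectly admissible under \textbf{V2} and, if anything, more conservative in its bookkeeping. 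For $\rho(A)>1$ your argument is genuinely different from the paper's. The paper constructs, via the real Jordan decomposition, an initial direction with $\|A^k x_0\|\gtrsim \rho^k k^{d-1}$, bounds $\operatorname{tr}(\operatorname{Cov}(x_k))$ by the same rate, applies Chebyshev at each time, and then invokes the non-evanescence result (Theorem 9.4.1 of Meyn--Tweedie) to rule out any drift function and hence reachability. You instead project onto a left eigenvector, note that $y_k=z_k/\lambda^k$ is an $L^2$-bounded (complex) martingale with increment variance summable like $\sum_j|\lambda|^{-2(j+1)}$, and use Doob's maximal inequality to get a positive-probability event on which $|z_k|\ge|z_0|/2$ for all $k$; choosing $|v^\top x_0|$ larger than twice the bound of $|v^\top x|$ on $G$ (possible with a real $x_0$ since $\operatorname{Re}v$ or $\operatorname{Im}v$ is nonzero) directly contradicts~\eqref{eq:reach}. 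This buys you a self-contained, more elementary proof that avoids Jordan-form growth estimates, avoids the per-$k$ Chebyshev bookkeeping, and needs no external Markov-chain machinery; it also handles the degenerate case $v^\top B=0$ for free, since then $z_k=\lambda^k z_0$ deterministically. What it does not give you, and the paper's route does, is the stronger structural conclusion that no function at all can satisfy the drift condition \textbf{V1} (via failure of non-evanescence), a fact the paper reuses in its later discussion; but for the theorem as stated, which only asserts failure of almost sure reachability, your argument suffices. Minor cosmetic points: the constant in your Doob bound is $4$ or $16$ depending on whether you apply the submartingale inequality to $|y_k-z_0|^2$ or combine the $L^2$ maximal inequality with Markov, which is immaterial since $|z_0|$ is free to be large.
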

\begin{proof} \textbf{Part A : \boldmath$\rho(A) < 1$}. 
 It is known that for $A$ with $\rho(A) < 1$, there exists a matrix $Q$ such that $A^\top Q A=Q-I$ (see e.g., Theorem 5. D5 in~\cite{edition1999linear}). Then for any $\varepsilon\in [0,1]$, $
 A^\top Q A \preceq Q - \varepsilon I$. This implies that for all $x \in \mathbb{R}^n$,
\begin{equation*}
   x^\top (A^\top Q A - Q) x \leq -\varepsilon x^\top x.
\end{equation*}
Since $\Sigma_w\succ 0$ is a finite matrix, we define the compact set $C$ as
\begin{equation}\label{eq:compact:lin}
C := \left\{ x \in \mathbb{R}^n \;\middle|\; x^\top x \leq \frac{1}{\alpha} \operatorname{tr}(B^\top Q B\, \Sigma_w) \right\},
\end{equation}
and choose an $\alpha > 0$ such that $\varepsilon \geq \alpha$. Then the following holds:
\begin{align*}
    x^\top x &> \frac{1}{\alpha} \operatorname{tr}(B^\top Q B \Sigma_w)
\quad \Rightarrow \nonumber \\ 
\varepsilon x^\top x & > \frac{\varepsilon}{\alpha} \operatorname{tr}(B^\top Q B \Sigma_w) \geq \operatorname{tr}(B^\top Q B \Sigma_w), 
\end{align*}
for all $x \in C^c$. Thus,
\begin{equation*}
    x^\top (A^\top Q A - Q)x \leq -\varepsilon x^\top x < -\operatorname{tr}(B^\top Q B \Sigma_w),
\end{equation*}
implies that the following holds for all $x\in C^c$ and $V(x)=x^\top  Q x$:
\begin{align*}
    &\mathbb{E}[V(Ax+Bw)]-V(x)\leq 0\Leftrightarrow \nonumber\\ & x^\top (A^\top Q A - Q) x + \operatorname{tr}(B^\top Q B\, \Sigma_w)\leq 0,
\end{align*} 
 with the compact set $C$ in~\eqref{eq:compact:lin}, and therefore condition~\textbf{V1} is satisfied.  

\medskip
For the variant function, first, recall the weighted vector norm $\|x\|^2_Q = x^\top Q x$ and the corresponding induced matrix norm $\|A\|_Q = \sup_{x \neq 0} {\|Ax\|_Q}/{\|x\|_Q}$ for a positive definite $Q$ satisfying $A^\top Q A= Q-I$. Thus $A^\top Q A \prec Q$ and it follows that $r_0 := \|A\|_Q^2 = \lambda_{\mathrm{max}}(Q^{-1}A^\top Q A) < 1$. For a bounded open target set $G$ containing the origin, we can under-approximate the set $G$ by a sublevel set of the form $\{x \mid \|x\|^2_Q < 2{b}\}\subset G$ for some $b>0$, i.e., $\|x\|_Q^2\geq 2b$ for all $x\in G^c$. Define the variant function as $U(x) = x^\top Q x - b$. It implies $\{x\mid U(x)\leq 0\}\subset G$.  Let $\delta = (1 - r_0)b > 0$. For all $w$ in the set $\{w \mid \|w\|_Q^2 \leq {(1 - r_0)b}/{\|B\|_Q^2}\}$, which has positive measure, we have,
\begin{align*}
    &U(Ax+Bw) - U(x) =  \|Ax + Bw\|_Q^2 - \|x\|_Q^2\\
    &\leq \|A\|_Q^2 \|x\|_Q^2 + \|B\|_Q^2 \|w\|_Q^2 - \|x\|_Q^2 \\
    &= (r_0 - 1)\|x\|_Q^2 + \|B\|_Q^2 \|w\|_Q^2 \leq 2(r_0 - 1)b + (1 - r_0)b\\
    &= -(1 - r_0)b = -\delta,
\end{align*}
for all $x\in G^c$. Thus, condition \textbf{V2} holds.
\smallskip

\textbf{Part B : \boldmath$\rho(A) > 1$}. We provide the proof through four steps:

\smallskip
\textbf{Step 1:} We first show that there exists a real $v_0 \in \mathbb{R}^n$ such that for every initial condition $x_0=\alpha v_0$ with an arbitrary real scalar $\alpha \neq 0$,
\begin{equation}\label{eq:step1:proof}
    \|A^k x_0\| \geq c_0 |\alpha| \rho^k k^{d-1},
\end{equation}
for all sufficiently large $k\in\mathbb{N}$ and for some constant $c_0>0$, where $d$ is the size of the largest Jordan block associated with eigenvalues $\lambda$ satisfying $|\lambda| = \rho:=\rho(A)$ in the complex Jordan form of $A$. Since $A$ is real, it admits a \textit{real Jordan decomposition}. That is, there exists a real invertible matrix $P \in \mathbb{R}^{n\times n}$ and a block-diagonal matrix $J$ such that $A = P J P^{-1}$ with $J = \operatorname{diag}(J_1, \dotsc, J_L)$, and each $J_\ell$, $\ell=1,\ldots,L$, has the following properties:

I) If $\lambda_\ell \in \mathbb{R}$ is a real eigenvalue, then $J_\ell = \lambda_\ell I + N\in \mathbb{R}^{d_\ell\times d_\ell},
$ where \(N\) is a nilpotent matrix with ones on the superdiagonal and zeros elsewhere, $d_\ell$ is the size of the Jordan block associated with $\lambda_\ell$, and we have,
    \[
 J_\ell^k =
\begin{pmatrix}
\lambda_\ell^k & \binom{k}{1} \lambda_\ell^{k-1} &  \cdots & \binom{k}{d-1} \lambda_\ell^{k-d+1} \\
0 & \lambda_\ell^k  & \cdots & \binom{k}{d-2} \lambda_\ell^{k-d+2} \\
\vdots & \vdots  & \ddots & \vdots \\
0 & 0 & \cdots & \lambda_\ell^k
\end{pmatrix},\,\,\text{for all}\,k\in\mathbb{N}.
    \]

II) If $\lambda_\ell = \alpha_\ell + \mathrm{i}\beta_\ell$ is a complex eigenvalue with $\beta_\ell \neq 0$, then
    \[
    J_\ell = \begin{bmatrix}
    C_\ell & I_2 & & \\
    & C_\ell & \ddots & \\
    & & \ddots & I_2 \\
    & & & C_\ell
    \end{bmatrix}
     \in \mathbb{R}^{2d_\ell \times 2d_\ell},\quad C_\ell = \begin{bmatrix}
    \alpha_\ell & \beta_\ell \\
    -\beta_\ell & \alpha_\ell
    \end{bmatrix},
    \]
    where $I_2$ is the identity matrix of size~$2$. In this case, $d_\ell$ refers to the size of the original Jordan block over $\mathbb{C}$. We then have,
 \[
J_\ell^k =
\begin{pmatrix}
C_\ell^k & \binom{k}{1} C_\ell^{k-1} & \binom{k}{2} C_\ell^{k-2} & \cdots & \binom{k}{d-1} C_\ell^{k-d+1} \\
0 & C_\ell^k & \binom{k}{1} C_\ell^{k-1} & \cdots & \binom{k}{d-2} C_\ell^{k-d+2} \\
0 & 0 & C_\ell^k & \cdots & \binom{k}{d-3} C_\ell^{k-d+3} \\
\vdots & \vdots & \vdots & \ddots & \vdots \\
0 & 0 & 0 & \cdots & C_\ell^k
\end{pmatrix},
\]   
for all $k\in\mathbb{N}$. Let $\rho = |\lambda_{\ell_0}|$ correspond to some eigenvalue $\lambda_{\ell_0}$ and associated block $J_{\ell_0}$ of size $d=d_{\ell_0}$. We now construct the real initial vector $x_0$ depending on whether $\lambda_{\ell_0}$ is real or complex.

\smallskip
\noindent \emph{Case 1: $\lambda_{\ell_0} \in \mathbb{R}$ (real eigenvalue).} Let $e_d \in \mathbb{R}^{d}$ denote the standard basis vector $e_d = (0, 0, \dotsc, 0, 1)^\top,$ where the $1$ appears in the $d$-th coordinate. Define $\widetilde{e} \in \mathbb{R}^n$ to be the vector obtained by inserting $e_d$ into the coordinates corresponding to block $\ell_0$ and zeros elsewhere. Define $v_0 = P \widetilde{e}$. For any $\alpha \neq 0$, consider
\[
A^k  x_0 = \alpha A^k  v_0=\alpha P J^k P^{-1}v_0= \alpha P J^k\widetilde{e}.
\]
Since $J$ is block-diagonal and $\widetilde{e}$ is nonzero only in block $\ell_0$,
\[
J^k \widetilde{e} = \left( 0, \dotsc, 0, J_{\ell_0}^k e_d, 0, \dotsc, 0 \right)^\top.
\]
Applying $e_d$ to the powers of $J_{\ell_0}^k$, we observe that
\[
J_{\ell_0}^k e_d = \begin{pmatrix}
\binom{k}{d-1} \lambda_{\ell_0}^{k-d+1}  & \binom{k}{d-2} \lambda_{\ell_0}^{k-d+2} & \cdots & \lambda_{\ell_0}^k
\end{pmatrix}^\top.
\]
Thus,
\[
\|J^k \widetilde{e}\|=\|J_{\ell_0}^k e_d\| \geq  \bar c_0 |\lambda_{\ell_0}|^k k^{d-1}= \bar c_0 \rho^k k^{d-1},
\,
\,\text{as} \,
k \to \infty,
\]
for some constant $\bar c_0 > 0$. Since $P$ is invertible,
\[
\|A^k  x_0\| \!=\! |\alpha| \|P J^k \widetilde{e}\| \geq |\alpha| \frac{1}{\|P^{-1}\|} \|J_{\ell_0}^k e_d\|\geq  c_0|\alpha| \rho^k k^{d-1},
\]
for some $c_0>0$ depending on $P$ and $\bar c_0$.

\smallskip
\noindent \emph{Case 2: $\lambda_{\ell_0} \in \mathbb{C} \setminus \mathbb{R}$ (complex eigenvalue).} Suppose $\lambda_{\ell_0} = \alpha_{\ell_0} + \mathrm{i}\beta_{\ell_0}$, with $\rho = \sqrt{\alpha_{\ell_0}^2 + \beta_{\ell_0}^2}$. Let $e_{d-1,d} \in \mathbb{R}^{2d}$ be the vector
$
e_{d-1,d} = (0, \dotsc, 0, 1, 1)^\top,
$
where the last two entries are $1$. Define $\widetilde{e} \in \mathbb{R}^n$ by embedding $e_{d-1,d}$ into block $\ell_0$, zeros elsewhere similar to Case 1 and define $v_0 = P \widetilde{e}$. The block $C_{\ell_0}$ satisfies
\[
C_{\ell_0}^k = \rho^k \begin{bmatrix}
\cos(k\theta_{\ell_0}) & \sin(k\theta_{\ell_0}) \\
-\sin(k\theta_{\ell_0}) & \cos(k\theta_{\ell_0})
\end{bmatrix},
\]
where $\theta_{\ell_0} = \arg(\lambda_{\ell_0})$. Thus,
\begin{align*}
    J_{\ell_0}^k e_{d-1,d}= \Bigg(&
\binom{k}{d-1} \rho^{k-d+1} \begin{bmatrix}
\cos(k\theta_{\ell_0}) +\sin(k\theta_{\ell_0}) \\
\cos(k\theta_{\ell_0})-\sin(k\theta_{\ell_0})
\end{bmatrix}^\top, \\ & \cdots , \rho^{k} \begin{bmatrix}
\cos(k\theta_{\ell_0}) +\sin(k\theta_{\ell_0}) \\
\cos(k\theta_{\ell_0})-\sin(k\theta_{\ell_0})
\end{bmatrix}^\top
\Bigg)^\top,
\end{align*}
and we get, $\|J_{\ell_0}^k e_{d-1,d}\| \geq  \bar c_0 \rho^k k^{d-1},$
as $k \to \infty,$ for some $\bar c_0>0$ and thus,~\eqref{eq:step1:proof} is obtained.

\smallskip
\textbf{Step 2:} We then show that there exists a constant $c_1 > 0$ such that $\sum_{i=0}^{k-1} \|A^i\| \leq  c_1 \rho^k k^{d-1}$, as $k \to \infty,$
for all sufficiently large $k\in\mathbb{N}$. We know that,
$
A^i = P J^i P^{-1},
$
and taking norms, we have
\[
\|A^i\| \leq \|P\| \|J^i\| \|P^{-1}\| = \kappa(P) \|J^i\|,
\]
where $\kappa(P) = \|P\| \|P^{-1}\|$ is the condition number of $P$. The norm of $J_\ell^i$ satisfies $\|J_\ell^i\| \leq \bar c_0 \rho^i i^{d_\ell-1}$ for large $i$ and for some constant $\bar c_0 $. Since $J$ is block diagonal, we have
$
\|J^i\| = \max_\ell \|J_\ell^i\|.
$
For small $i$, the maximum may be achieved by a block with smaller eigenvalue but larger Jordan block size. 
However, for large $i$, the exponential growth of the blocks corresponding to eigenvalues with modulus $\rho$ dominates the polynomial growth. Thus, for large $i$,
$
\|J^i\| \leq \bar c_0 \rho(A)^i i^{d-1},
$
where $d$ is the size of the largest Jordan block associated with an eigenvalue of modulus $\rho$. Summing over $i$ from $0$ to $k-1$, we get
\[
\sum_{i=0}^{k-1} \|A^i\| \leq \kappa(P) \sum_{i=0}^{k-1} \|J^i\|\leq  c_1\rho(A)^k k^{d-1},
\]
for all large enough $k$. Note that, for small $i$, the terms $\|J^i\|$ are bounded and contribute a finite amount to the sum. Since $\rho> 1$,  the geometric growth eventually dominates, and the behavior of the sum is governed by the terms with large $i$.

\smallskip
\textbf{Step 3:} We then show that there exists an initial condition $x_0$ and an index $k_0$ such that for all $M>0$, $\|x_k\|\geq M$ holds for all $k\geq k_0$ with positive probability. By Chebyshev’s inequality in Euclidean form, for any $\beta>0$,
\begin{equation*}
   \mathbb{P}\Big(\|x_k-\mathbb{E}[x_k]\|<\beta \sqrt{\mathrm{tr}(\Sigma_k)} \Big)\geq 1-\frac{1}{\beta^2},
\end{equation*}
where $\Sigma_k = \operatorname{Cov}(x_k)$. Using the triangle inequality, $\|x_k-\mathbb{E}[x_k]\|\geq \|\mathbb{E}[x_k]\|-\|x_k\|$, we obtain the following inequality:
\begin{align*}
 \, &\mathbb{P}\Big (\|x_k\|>\|\mathbb{E}[x_k]\|-\beta \sqrt{\mathrm{tr}(\Sigma_k)} \Big )\geq 1-\frac{1}{\beta^2}.
\end{align*}
Covariance $\Sigma_k$ satisfies $ \Sigma_k = \sum_{i=0}^{k-1}  M_i$ with $M_i:=A^i B \Sigma_w B^\top (A^i)^\top$. Hence,
\begin{equation*}
    \operatorname{tr}(\Sigma_k) = \sum_{i=0}^{k-1} \operatorname{tr}(M_i)=\sum_{i=0}^{k-1} \|A^i B \Sigma_w^{1/2}\|_F^2\leq c^2\sum_{i=0}^{k-1} \|A^i\|_2^2, 
\end{equation*}
with $c=\|B \Sigma_w^{1/2}\|_F$. Moreover, taking expectation of the system dynamics yields $\mathbb{E}[x_{k+1}]=A\,\mathbb{E}[x_k]$, and it implies $\mathbb{E}[x_k]=A^kx_0$. Thus, using Steps $1$ and $2$,
\begin{align*}
    &\|\mathbb{E}[x_k]\|-\beta  \sqrt{\mathrm{tr}(\Sigma_k)} \geq \|A^kx_0\|-\beta c\sqrt{\sum_{i=0}^{k-1} \|A^i\|_2^2} \\ &\qquad\qquad\geq c_0 |\alpha| \rho^k k^{d-1}-\beta c c_1\rho^k k^{d-1} \geq c_2\rho^k k^{d-1},
\end{align*}
for some constants $c_1, c_0>0$, and $d\geq 1$ and all $\alpha\neq 0$.  Now choose $\beta>1$ and $\alpha\geq{(c_2+\beta cc_1)}/{c_0}$ ensuring the last inequality. Then the term $c_2\rho^k k^{d-1}$ grows unboundedly with $k$ and exceeds any given $M>0$ for $k$ large enough. Therefore, $\|x_k\|\geq M$ holds with the probability at least $1-\frac{1}{\beta^2}>0$.

\smallskip
\textbf{Step 4:} Finally, we invoke results from~\cite{meyn2012markov} to conclude that no drift function can satisfy condition~\textbf{V1}. Recall that a system is said to be \emph{non-evanescent} if, for all initial conditions $x_0$, we have $\mathbb{P}\left(\|x_k\|\to \infty \right)=0$. However, the result in Step~3 establishes that the system violates the non-evanescence condition. Therefore, by Theorem 9.4.1 in~\cite{meyn2012markov}, it follows that there exists no function $V$ and a compact set $C$ satisfying the drift condition~\eqref{eq:drift}.
\end{proof}

Theorem~\ref{theorem:linear} shows that when $\rho(A)<1$, quadratic drift and variant functions always exist and therefore the system almost surely reaches any target set containing the origin.  When $\rho(A) > 1$, divergence can be formally established, and no certificate satisfying the drift condition exists. These results offer  a complete characterization for systems with non-critical dynamics. In the following, we address the boundary case $\rho(A) = 1$, where the system exhibits more subtle behavior. In this setting, the spectral radius, Jordan structure, number of eigenvalues with a modulus of one, and noise directions, play a central role and require more delicate arguments.

We first show that 
the system diverges when there is a Jordan block of size $d \geq 2$  associated with an eigenvalue $\lambda$ such that $|\lambda| = 1$ even if $\rho(A)=1$.
The proof is similar to that of 
Theorem~\ref{theorem:linear} for $\rho(A)>1$.

\begin{proposition}\label{cor:rho_equals_1_jordan}
    Consider the system described by~\eqref{eq:linear_system} with $\rho(A)= 1$, and suppose the Jordan canonical form of $A$ contains a Jordan block of size greater than one associated with an eigenvalue $\lambda$ such that $|\lambda| = 1$. Then the system does not almost surely reach any bounded open target set.
\end{proposition}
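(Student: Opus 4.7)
My plan is to mirror the four-step pattern of Part~B of Theorem~\ref{theorem:linear}, with $\rho(A)=1$ in place of $\rho(A)>1$: (i) lower-bound $\|A^k x_0\|$ using the real Jordan form, (ii) upper-bound $\sum\|A^i\|^2$, (iii) show from these bounds that the trajectory eventually escapes every bounded set from a suitable $x_0$, and (iv) invoke Theorem~9.4.1 of~\cite{meyn2012markov} to conclude that no drift function $V$ and compact set $C$ can satisfy~\textbf{V1}, so that Theorem~\ref{thm:nece_suff} precludes almost sure reachability. The only step that genuinely departs from the proof of Theorem~\ref{theorem:linear} is Step~(iii): with $\rho(A)=1$ the accumulated noise standard deviation grows strictly faster than the mean rather than at the same rate, so the one-shot Chebyshev bound used for $\rho(A)>1$ becomes uninformative.

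Steps~(i) and~(ii) are direct specializations. Applying the real-Jordan construction of Step~1 of Theorem~\ref{theorem:linear} to the Jordan block of size $d\geq 2$ at a unit-modulus eigenvalue $\lambda$ produces a vector $v_0$ so that $x_0=\alpha v_0$ gives $\|A^k x_0\|\geq c_0|\alpha|\,k^{d-1}$ for all sufficiently large $k$. The same block estimates give $\|A^i\|\leq \bar c\, i^{d-1}$, hence $\sum_{i=0}^{k-1}\|A^i\|_2^2\leq c_1^2 k^{2d-1}$ and $\sqrt{\operatorname{tr}(\Sigma_k)}\leq c c_1\, k^{d-1/2}$, which dominates $\|\mathbb{E}[x_k]\|\sim k^{d-1}$ for $d\geq 2$.

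To adapt Step~(iii), I would project onto the scalar functional $\eta^\top x_k$, with $\eta$ taken as the dual basis vector (in the real Jordan basis) at the top of the left Jordan chain, so that $\mathbb{E}[\eta^\top x_k]\asymp k^{d-1}$ and $\operatorname{Var}(\eta^\top x_k)\asymp k^{2d-1}$ (assuming the noise fully excites the chain; the less generic cases, in which $\operatorname{Var}(\eta^\top x_k)$ grows more slowly, reduce to the original mean-dominates-deviation Chebyshev argument of Theorem~\ref{theorem:linear}). Since $w_k$ has positive density on an open neighborhood of the origin and $\Sigma_w\succ 0$, the scalar $\eta^\top x_k$ admits a bounded density, and a Kolmogorov--Rogozin type anti-concentration estimate gives
\[
\mathbb{P}\bigl(|\eta^\top x_k|\leq M\bigr)\;\leq\;\frac{C\,M}{\sqrt{\operatorname{Var}(\eta^\top x_k)}}\;\leq\;\frac{C'\,M}{k^{d-1/2}}.
\]
For $d\geq 2$ the series $\sum_k k^{-(d-1/2)}$ converges, so the first Borel--Cantelli lemma yields $|\eta^\top x_k|>M$ for all but finitely many $k$ almost surely; letting $M\to\infty$ shows $\|x_k\|\to\infty$ almost surely from this $x_0$, which is strictly stronger than the positive-probability divergence required by Part~B of Theorem~\ref{theorem:linear}.

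Step~(iv) is then unchanged: the chain is evanescent from the chosen $x_0$, so by Theorem~9.4.1 of~\cite{meyn2012markov} no $V$ and compact $C$ can satisfy~\eqref{eq:drift}, and Theorem~\ref{thm:nece_suff} rules out almost sure reachability to any bounded open target set. The main obstacle is the anti-concentration estimate in Step~(iii); the assumptions $\Sigma_w\succ 0$ and that $w_k$ has positive density on a neighborhood of the origin are exactly what is needed to ensure the one-dimensional projection has a density scaling like $\operatorname{Var}(\eta^\top x_k)^{-1/2}$.
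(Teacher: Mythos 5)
Your proposal establishes the right statement, but it does not follow the paper's own proof: the paper disposes of this proposition in three lines by asserting that Part~B of Theorem~\ref{theorem:linear} goes through verbatim with $\rho^k k^{d-1}$ replaced by $k^{d-1}$, i.e.\ it reuses the same Chebyshev step to obtain divergence with positive probability and then invokes Theorem~9.4.1 of~\cite{meyn2012markov} exactly as you do in your step~(iv). You correctly identify that the Chebyshev step is precisely what does \emph{not} transfer: when the noise excites the Jordan chain, $\sqrt{\operatorname{tr}(\Sigma_k)}\asymp k^{d-1/2}$ dominates the mean $\|A^kx_0\|\asymp k^{d-1}$, so the lower bound $\|\mathbb{E}[x_k]\|-\beta\sqrt{\operatorname{tr}(\Sigma_k)}$ is eventually negative no matter how large $\alpha$ is chosen (and the Step~2 estimate $\sum_{i<k}\|A^i\|\le c_1\rho^k k^{d-1}$ itself relies on $\rho>1$). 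Your repair---an anti-concentration bound on a scalar projection along the top left Jordan covector, summed and fed into Borel--Cantelli, in the spirit of Part~A-II of the proof of Theorem~\ref{theorem:rho1:1}---is the right way to close this, and it even delivers almost-sure divergence, stronger than the positive-probability evanescence the paper needs; note that in this branch the mean plays no role, so even $x_0=0$ would do. Your Chebyshev fallback for degenerate excitation is also sensible, though for a complex unit-modulus $\lambda$ you should apply it to the two-dimensional real block projection or to $\|x_k\|$ rather than to $\eta^\top x_k$, since the scalar mean $\eta^\top A^kx_0$ oscillates through zero.

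One step of yours needs more care than you give it: the inequality $\mathbb{P}(|\eta^\top x_k|\le M)\le CM/\sqrt{\operatorname{Var}(\eta^\top x_k)}$ is not the off-the-shelf Kolmogorov--Rogozin bound. Applied at the fixed scale $M$, Kolmogorov--Rogozin only counts spread-out summands and yields a decay of order $k^{-1/2}$, while bounding the density of the sum by that of its largest summand $\eta^\top A^jBw_{k-1-j}$ only yields $k^{-(d-1)}$, which is not summable when $d=2$. To obtain the claimed $k^{-(d-1/2)}$ you need an Esseen-type characteristic-function estimate (or a local CLT) that exploits the growing coefficients $\eta^\top A^jB\asymp j^{d-1}$ collectively, and this requires a regularity assumption on $w$ (e.g.\ a nontrivial absolutely continuous component): the paper's standing assumptions only give $\Sigma_w\succ0$ and that the support of $w$ contains an open ball, which does not guarantee that your projection has a bounded density at all. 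This is a fixable technical point (the paper's own Part~A-II argument implicitly assumes comparable density regularity when it lets $f_k$ converge pointwise to a Gaussian density), but as written it is the one genuine gap in an argument that is otherwise sound and, at the critical step, more careful than the paper's.
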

\begin{proof} 
The proof follows the same reasoning of Theorem~\ref{theorem:linear} for $\rho(A)>1$; the only difference is that $\rho(A)= 1$, and we get the polynomial increment in the form of  $\|A^k x_0\| \geq c_0 |\alpha| k^{d-1}$. Thus, we can always find a real initial condition such that the system diverges with positive probability. Similar to Step~4 in the proof of Theorem~\ref{theorem:linear}, we can show that the system does not admit a valid drift function and no bounded open target set is almost surely reachable by the system.
    \end{proof}

In the following, we provide two fundamental theorems and one lemma that will be used in the proof of the main theorem for the almost sure reachability of linear systems with $\rho(A)=1$.
 \begin{theorem}[Lindeberg-Feller central limit theorem]\label{thm:LFCLT} Suppose for each $k\in\mathbb{N}_{>0}$, the random variables $Y_{k,j}\in\mathbb{R}$ for $j=0,1,\ldots, k-1$, are independent, zero mean $\mathbb{E}[Y_{k,j}]=0$, and variances $\operatorname{Var}(Y_{k,j})=s^2_{k,j}$  (forming a triangular array). Define the total variance $s^2_k:=\sum_{j=0}^{k-1}s^2_{k,j}$. Assume $0<s^2_k<\infty$ for all large $k$. Then if the following \textit{Lindeberg condition} holds:
\begin{equation*}
\lim_{k \to \infty} \frac{1}{s_k^2} \sum_{j=0}^{k-1} \mathbb{E}\left[Y_{k,j}^2  \mathbf{1}_{\{|Y_{k,j}| > \varepsilon s_k\}}\right] = 0,
\end{equation*}
for any $\varepsilon>0$, the sequence of sums ${Y_k}:=\sum_{j=0}^{k-1}Y_{k,j}$ converges in distribution to a normal distribution with mean zero and variance $s^2_k$. 
\end{theorem}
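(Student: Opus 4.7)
The plan is to establish the conclusion via characteristic functions and Lévy's continuity theorem. Setting $X_{k,j} := Y_{k,j}/s_k$ so that $\sum_{j=0}^{k-1} \operatorname{Var}(X_{k,j}) = 1$, and writing $\phi_{k,j}(t) := \mathbb{E}[e^{\mathrm{i} t X_{k,j}}]$, row-independence yields $\mathbb{E}[e^{\mathrm{i} t Y_k/s_k}] = \prod_{j=0}^{k-1} \phi_{k,j}(t)$. The goal becomes pointwise convergence of this product to $e^{-t^2/2}$, the characteristic function of the standard normal, which by Lévy's theorem gives $Y_k/s_k \Rightarrow \mathcal{N}(0,1)$.

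The core of the argument is a Lindeberg-style truncation bound. Starting from the elementary inequality $|e^{\mathrm{i}u} - 1 - \mathrm{i}u + u^2/2| \leq \min(|u|^3/6,\,u^2)$, I would apply it with $u = t X_{k,j}$, use $\mathbb{E}[X_{k,j}] = 0$ and $\operatorname{Var}(X_{k,j}) = s_{k,j}^2/s_k^2$, and split the resulting expectation over $\{|X_{k,j}| \leq \varepsilon\}$ and $\{|X_{k,j}| > \varepsilon\}$ to obtain
\[
\left|\phi_{k,j}(t) - \Bigl(1 - \tfrac{t^2 s_{k,j}^2}{2 s_k^2}\Bigr)\right| \leq \frac{|t|^3 \varepsilon\, s_{k,j}^2}{6\, s_k^2} + \frac{t^2}{s_k^2}\,\mathbb{E}\!\left[Y_{k,j}^2 \mathbf{1}_{\{|Y_{k,j}|>\varepsilon s_k\}}\right].
\]
Summing over $j$ and using $\sum_j s_{k,j}^2/s_k^2 = 1$ gives a cumulative error bounded by $|t|^3 \varepsilon/6 + t^2 L_k(\varepsilon)$, where $L_k(\varepsilon)$ is exactly the Lindeberg sum, hence $L_k(\varepsilon) \to 0$ as $k\to\infty$ for each $\varepsilon>0$.

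Next, I would upgrade this additive estimate to the multiplicative product. The Lindeberg condition implies the Feller/uniform-negligibility property $\max_j s_{k,j}^2/s_k^2 \to 0$ via the splitting $s_{k,j}^2/s_k^2 \leq \varepsilon^2 + L_k(\varepsilon)$. Combined with the standard lemma that $\bigl|\prod_j(1 + z_{k,j}) - \exp\bigl(\sum_j z_{k,j}\bigr)\bigr| \to 0$ whenever $\max_j |z_{k,j}| \to 0$ and $\sum_j |z_{k,j}|$ stays bounded (proved via the estimate $|\log(1+z)-z|\leq |z|^2$ for $|z|\leq 1/2$), this replaces $\prod_j \phi_{k,j}(t)$ by $\exp\bigl(-t^2 \sum_j s_{k,j}^2/(2 s_k^2)\bigr) = e^{-t^2/2}$ up to vanishing error. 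Sending $k\to\infty$ first and then $\varepsilon\to 0$ closes the argument.

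The main obstacle is the bookkeeping in the truncation step: one must ensure that the $\varepsilon$-dependent (cubic) term and the Lindeberg-dependent (quadratic) term are controlled uniformly in $j$, and that the order of limits is respected so both vanish. Conceptually the proof is classical and no new ideas beyond characteristic-function manipulation are required; the Lindeberg condition is engineered precisely so that this bookkeeping closes without any additional moment or identically-distributed hypothesis.
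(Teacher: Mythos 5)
Your outline is essentially correct, but note that the paper does not prove this statement at all: it is a classical result, and the paper's ``proof'' is a one-line citation to Theorem~5.33 of Boos--Stefanski. What you have written is the standard characteristic-function proof of Lindeberg--Feller, and the steps are the right ones: normalize by $s_k$, use independence to factor $\mathbb{E}[e^{\mathrm{i}tY_k/s_k}]$ into $\prod_j \phi_{k,j}(t)$, apply the Taylor bound $|e^{\mathrm{i}u}-1-\mathrm{i}u+u^2/2|\le\min(|u|^3/6,\,u^2)$ with the $\varepsilon$-truncation to get the cumulative error $|t|^3\varepsilon/6+t^2L_k(\varepsilon)$, derive Feller negligibility $\max_j s_{k,j}^2/s_k^2\le\varepsilon^2+L_k(\varepsilon)\to0$, pass from the product to the exponential, and finish with L\'evy continuity. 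Two small points of bookkeeping you should make explicit if you write this out in full: (i) comparing $\prod_j\phi_{k,j}(t)$ with $\prod_j\bigl(1-t^2s_{k,j}^2/(2s_k^2)\bigr)$ term by term uses the elementary lemma $\bigl|\prod_j a_j-\prod_j b_j\bigr|\le\sum_j|a_j-b_j|$ for complex numbers of modulus at most one, so you need $\bigl|1-t^2s_{k,j}^2/(2s_k^2)\bigr|\le1$, which for fixed $t$ holds for all large $k$ precisely because of the Feller property you derived; and (ii) the conclusion you actually prove, $Y_k/s_k\Rightarrow\mathcal{N}(0,1)$, is the correct reading of the theorem --- the paper's phrasing ``converges in distribution to a normal with variance $s_k^2$'' is loose since $s_k^2$ depends on $k$, and your normalized formulation is the one that should be (and implicitly is) used later in the paper's Part~A-II argument. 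With those points filled in, your plan closes with no additional ideas needed.
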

\begin{proof}
    See e.g., Theorem 5.33 in~\cite{boos2013essential}.
\end{proof}
\begin{theorem}[Cramér--Wold]\label{thm:CW}
Let \( \{X_k\}_{k=1}^\infty \) be a sequence of $\mathbb{R}^n$-valued random vectors, and let \( X \in \mathbb{R}^n \) be a random vector. Then \( X_k \xrightarrow{d} X \) (i.e., \( X_k \) converges in distribution to \( X \)) if and only if
$
u^\top X_k \xrightarrow{d} u^\top X, 
$ for all $u \in \mathbb{R}^n$.
\end{theorem}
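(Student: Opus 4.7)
The plan is to prove the Cramér--Wold theorem via characteristic functions, using Lévy's continuity theorem as the key black box. Recall that for an $\mathbb{R}^n$-valued random vector $Y$, its characteristic function is $\phi_Y(t) = \mathbb{E}[e^{\mathrm{i}\, t^\top Y}]$ for $t \in \mathbb{R}^n$, and Lévy's continuity theorem states that convergence in distribution of $\mathbb{R}^n$-valued random vectors is equivalent to pointwise convergence of the corresponding characteristic functions (provided the limiting characteristic function is continuous at the origin, which holds automatically when the limit is a genuine random vector).

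The key algebraic observation I would exploit is that for any $u \in \mathbb{R}^n$ and any scalar $s \in \mathbb{R}$,
\begin{equation*}
    \phi_{u^\top X_k}(s) \;=\; \mathbb{E}\bigl[e^{\mathrm{i}\, s\, u^\top X_k}\bigr] \;=\; \phi_{X_k}(su),
\end{equation*}
and likewise $\phi_{u^\top X}(s) = \phi_X(su)$. This turns questions about the $\mathbb{R}$-valued random variables $u^\top X_k$ into questions about the multivariate characteristic function $\phi_{X_k}$ evaluated along the ray spanned by $u$.

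For the forward direction, I would assume $X_k \xrightarrow{d} X$. By Lévy's continuity theorem in $\mathbb{R}^n$, $\phi_{X_k}(t) \to \phi_X(t)$ for every $t \in \mathbb{R}^n$. Fixing an arbitrary $u \in \mathbb{R}^n$ and plugging in $t = su$, the identity above yields $\phi_{u^\top X_k}(s) \to \phi_{u^\top X}(s)$ for every $s \in \mathbb{R}$, and a second application of Lévy's continuity theorem in $\mathbb{R}$ gives $u^\top X_k \xrightarrow{d} u^\top X$. For the reverse direction, I would assume $u^\top X_k \xrightarrow{d} u^\top X$ for every $u$. Specializing to $s = 1$ in the identity, this gives $\phi_{X_k}(u) = \phi_{u^\top X_k}(1) \to \phi_{u^\top X}(1) = \phi_X(u)$ for every $u \in \mathbb{R}^n$, and Lévy's continuity theorem in $\mathbb{R}^n$ then yields $X_k \xrightarrow{d} X$.

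The main subtlety, rather than an obstacle, is verifying the hypothesis of Lévy's continuity theorem in the reverse direction: one needs the limiting characteristic function $\phi_X$ to be continuous at $0$. This is automatic because $X$ is assumed to be a bona fide $\mathbb{R}^n$-valued random vector, so $\phi_X$ is continuous on all of $\mathbb{R}^n$ by dominated convergence. If instead one only assumed that $u^\top X_k$ converged in distribution to \emph{some} scalar limit for each $u$, without positing the joint limit $X$ in advance, one would have to first verify tightness of $\{X_k\}$ (which follows from tightness of each projection via, e.g., a union-bound argument over the $n$ coordinate directions) and then identify the limit through the characteristic functions; but under the statement as given, the limit vector $X$ is already assumed to exist, so no such additional work is required.
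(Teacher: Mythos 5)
Your proof is correct. The paper does not prove this theorem in-text at all---it simply cites Theorem 5.31 of a reference---and your characteristic-function argument via L\'evy's continuity theorem is exactly the standard proof of Cram\'er--Wold, including the correct observation that continuity of $\phi_X$ at the origin is automatic because the limit $X$ is assumed to be a genuine random vector (the only place where tightness would otherwise need attention). The only simplification worth noting is that the forward direction follows immediately from the continuous mapping theorem applied to $x \mapsto u^\top x$, with no need to pass through characteristic functions, though your route is equally valid.
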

\begin{proof}
    See e.g., Theorem 5.31 in~\cite{boos2013essential}.
\end{proof}

\begin{lemma}[Borel--Cantelli]
Let \( \{x_k\}_{k=1}^\infty \) be a sequence of random vectors in \( \mathbb{R}^n \). If
$
\sum_{k=1}^\infty \mathbb{P}(x_k \in \mathcal{B}) < \infty,
$
for a bounded set \( \mathcal{B} \subset \mathbb{R}^n \), then
$
\mathbb{P}(x_k \in \mathcal{B}, \text{ infinitely often}) = 0
$.
\end{lemma}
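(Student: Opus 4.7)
The plan is to reduce the statement to the classical measure-theoretic first Borel--Cantelli argument. First I would introduce the events $A_k := \{\omega \in \Omega \mid x_k(\omega) \in \mathcal{B}\}$ and rewrite the "infinitely often" event in its standard form as the set-theoretic limit superior,
\begin{equation*}
\{x_k \in \mathcal{B} \text{ i.o.}\} \;=\; \limsup_{k\to\infty} A_k \;=\; \bigcap_{n=1}^{\infty} \bigcup_{k=n}^{\infty} A_k,
\end{equation*}
which captures precisely those sample points $\omega$ for which $x_k(\omega)\in\mathcal{B}$ holds for infinitely many indices $k$. Since $\mathcal{B}$ is a bounded (hence Borel-measurable) subset of $\mathbb{R}^n$ and each $x_k$ is a random vector, every $A_k$ lies in the underlying $\sigma$-algebra, so the above expression is well-defined.

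Next I would use the monotonicity of the tail unions. The sequence $B_n := \bigcup_{k=n}^{\infty} A_k$ is decreasing in $n$, so $\limsup_{k\to\infty} A_k \subseteq B_n$ for every $n\geq 1$, and monotonicity of $\mathbb{P}$ together with countable subadditivity yield
\begin{equation*}
\mathbb{P}\!\left(\limsup_{k\to\infty} A_k\right) \;\leq\; \mathbb{P}(B_n) \;\leq\; \sum_{k=n}^{\infty} \mathbb{P}(A_k).
\end{equation*}

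Finally, the hypothesis $\sum_{k=1}^{\infty} \mathbb{P}(x_k \in \mathcal{B}) < \infty$ forces the tail of a convergent series of nonnegative terms to vanish, i.e., $\lim_{n\to\infty} \sum_{k=n}^{\infty} \mathbb{P}(A_k) = 0$. Combining this with the previous display and letting $n\to\infty$ gives $\mathbb{P}(\limsup_{k\to\infty} A_k) = 0$, which is exactly the conclusion.

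There is no substantive obstacle here: this is a textbook first Borel--Cantelli argument and the structure of the state space plays no role beyond ensuring measurability of $\{x_k\in\mathcal{B}\}$. The only subtlety worth flagging is conceptual rather than technical --- one must be careful to interpret "infinitely often" as the $\limsup$ of \emph{sets}, not as a pointwise limit of the sequence $x_k$, since the latter need not exist. No independence of the $x_k$ is required, which is crucial for the intended use in the stochastic trajectory setting of the paper.
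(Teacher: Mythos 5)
Your proof is correct, but note that the paper does not actually prove this lemma at all --- it simply cites the literature (the reference on the Borel--Cantelli lemma), treating it as a classical result. What you have written out is the standard self-contained first Borel--Cantelli argument: identify the ``infinitely often'' event with $\limsup_k A_k = \bigcap_n \bigcup_{k\geq n} A_k$, bound its probability by $\mathbb{P}\bigl(\bigcup_{k\geq n} A_k\bigr) \leq \sum_{k\geq n}\mathbb{P}(A_k)$ via monotonicity and countable subadditivity, and let the tail of the convergent series go to zero. This is exactly the argument the cited source contains, so your route is not substantively different, just explicit where the paper is not; the added value is that your write-up makes clear that no independence of the $x_k$ is needed (independence only enters the second, converse Borel--Cantelli lemma), which is indeed essential for the paper's application to the dependent trajectory $\{x_k\}$. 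One small slip: your parenthetical ``bounded (hence Borel-measurable)'' is not a valid implication --- a bounded subset of $\mathbb{R}^n$ need not be Borel. Measurability of $\mathcal{B}$ should instead be taken as implicit in the hypothesis (the probabilities $\mathbb{P}(x_k\in\mathcal{B})$ must be defined), and in the paper's use the set is a ball, so nothing is lost; boundedness itself plays no role in the proof.
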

\begin{proof}
    See e.g.,~\cite{chandra2012borel}.
\end{proof}

To proceed with our analysis for linear systems with $\rho(A)=1$, we make the following assumption on system~\eqref{eq:linear_system}. A more detailed explanation will follow later in the section.

 \begin{assumption}\label{Assum:rho1}
 The matrix $A$ has spectral radius $\rho(A) =1$, and all Jordan blocks of $A$ associated with eigenvalues $\lambda$ satisfying $|\lambda| = 1$ are of size one. Moreover, the noise process satisfies $\mathbb{E}[\|w_k\|^3] < \infty$ and the matrix $B$ is full rank with $n=m$. 
 \end{assumption}
 
\begin{theorem}\label{theorem:rho1:1}
Under Assumption~\ref{Assum:rho1}, almost sure reachability for the linear  system~\eqref{eq:linear_system} holds for any bounded open target set containing the origin if and only if the number of eigenvalues with modulus one is at most two.
\end{theorem}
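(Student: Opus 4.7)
Under Assumption~\ref{Assum:rho1}, the real Jordan form of $A$ yields an invariant splitting $\mathbb{R}^n = \mathcal{S}\oplus \mathcal{N}$ with $\rho(A|_{\mathcal{S}})<1$ and $A_u := A|_{\mathcal{N}}$ semisimple on the unit circle. Hence there exists a positive definite $Q_u$ with $A_u^\top Q_u A_u = Q_u$, so $A_u$ acts as an isometry in the $Q_u$-norm. Let $d=\dim\mathcal{N}$ equal the number of unit-modulus eigenvalues. The stable component on $\mathcal{S}$ contributes an exponentially decaying term that is handled by the quadratic drift from the $\rho(A)<1$ part of Theorem~\ref{theorem:linear}, so the entire question reduces to analyzing the projected dynamics $y_{k+1} = A_u y_k + \hat B w_k$ on $\mathcal{N}$, where $\hat B$ is still full rank since $B$ is square and full rank.

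For the ``only if'' direction I assume $d\geq 3$ and introduce the conjugated process $z_k := A_u^{-k} y_k$, which satisfies the random-walk recursion $z_k = z_0 + \sum_{j=0}^{k-1} A_u^{-(j+1)}\hat B w_j$. Because $A_u$ is $Q_u$-isometric, $\operatorname{tr}(\operatorname{Var}(z_k)) = k\,\operatorname{tr}(\hat B \Sigma_w \hat B^\top)$, growing linearly. Using $\mathbb{E}[\|w\|^3]<\infty$, the Lindeberg condition of Theorem~\ref{thm:LFCLT} holds for each scalar projection of the triangular array, and combining with the Cram\'er--Wold device (Theorem~\ref{thm:CW}) yields $z_k/\sqrt{k} \xrightarrow{d} \mathcal{N}(0,\bar\Sigma)$ on $\mathbb{R}^d$, where $\bar\Sigma\succ 0$ is the Ces\`aro limit of the rotated noise covariances. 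A local-CLT-type density estimate, relying on $\hat B$ having full rank and $w_k$ having positive mass on an open ball, then gives $\mathbb{P}(z_k \in \mathcal{B}) = O(k^{-d/2})$ for every bounded $\mathcal{B}$; for $d\geq 3$ this is summable, and the Borel--Cantelli lemma forces $\|y_k\|\to\infty$ almost surely. Finally, were $\mathbb{P}_{x_0}(\exists k: x_k \in G) = 1$ to hold from every $x_0$, the strong Markov property applied at successive entry times to $G$ would produce infinitely many visits from any $x_0\in G$, contradicting $\|y_k\|\to\infty$; hence a.s.\ reachability fails.

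For the ``if'' direction with $d\leq 2$, the same change of variables makes $z_k$ a random walk whose normalized version converges to a non-degenerate Gaussian on $\mathbb{R}^d$ by the same Lindeberg--Feller/Cram\'er--Wold route. The classical Chung--Fuchs argument for one- and two-dimensional random walks with non-degenerate covariance and finite third moment (adapted to the present triangular-array setup via the Ces\`aro limit $\bar\Sigma$) shows that $z_k$ visits any open neighborhood of the origin almost surely; since $A_u$ is $Q_u$-isometric, $Q_u$-balls are $A_u$-invariant, so $y_k$ also enters any open neighborhood of the origin contained in $G$ almost surely. Combining with the exponential decay of the stable component then gives a.s.\ reachability of $G$ from every $x_0$. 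Explicit certificates exist by Theorem~\ref{thm:nece_suff}; concretely, one can take $V(y)=\log(1+y^2)$ when $d=1$ (whose Hessian $V''(y)\sim -2/y^2$ makes $\Delta V \sim -c/y^2<0$ after a second-order Taylor expansion controlled by $\mathbb{E}[\|w\|^3]<\infty$), and $V(y)=\log\log\bigl(e+y^\top Q_u y\bigr)$ when $d=2$ (the discrete analogue of the superharmonic function $\log\log\|y\|$ for planar Brownian motion, whose Hessian gives $\Delta V = O(1/(\|y\|^2\log^2\|y\|))<0$ once the leading Taylor terms are symmetrized against the $A_u$-averaged covariance), paired with the quadratic $x_s^\top Q_s x_s$ on $\mathcal{S}$ from Theorem~\ref{theorem:linear}, and with a companion variant $U$ obtained by offsetting $V$ so that $\{U\leq 0\}\subset G$. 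The non-degeneracy of $\hat B$ together with the open support of $w$ yields the one-step decrease and validates \textbf{V2}.

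The hardest ingredient is converting distributional CLT statements into pointwise estimates. The $d\geq 3$ case needs the polynomial local-CLT rate $\mathbb{P}(z_k\in\mathcal{B})=O(k^{-d/2})$, which uses the full-rank structure of $B$ and the open-support condition on $w$ to ensure that the density of $z_k$ becomes smooth and bounded after finitely many convolutions; the $d\leq 2$ case needs a matching recurrence (hitting) lower bound, classically Chung--Fuchs for i.i.d.\ increments but here requiring a triangular-array extension that again leans on non-degeneracy of the noise. The explicit log and log-log drift constructions are subtle when the instantaneous covariance $\hat B\Sigma_w\hat B^\top$ does not commute with $A_u$, and the standard remedy is to symmetrize via the $A_u$-invariant Ces\`aro average $\bar\Sigma$ and work in the corresponding metric.
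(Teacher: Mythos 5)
Your overall architecture coincides with the paper's: the stable/unit-modulus splitting, the non-reachability argument for more than two unit-modulus eigenvalues via Lindeberg--Feller (Theorem~\ref{thm:LFCLT}), Cram\'er--Wold (Theorem~\ref{thm:CW}), a pointwise ball estimate of order $k^{-d/2}$, and Borel--Cantelli, and a logarithmic drift plus quadratic variant for at most two unit-modulus eigenvalues. Your conjugation $z_k=A_u^{-k}y_k$ is a cosmetic variant of the paper's direct computation with $x_k=\sum_j A^jBw_{k-1-j}$, and your strong-Markov completion of the transience contradiction is actually more explicit than the paper's one-line conclusion; your $d=1$ certificate $\log(1+y^2)$ is valid and plays exactly the role of the paper's $\sqrt{\ln|x|}$. (The upgrade from the CLT to $\mathbb{P}(z_k\in\mathcal B)=O(k^{-d/2})$ is a shared soft spot: open support of $w$ alone does not give a bounded density after finitely many convolutions; since the paper makes the same leap, I do not count it against you.)

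The genuine gap is the $d=2$ sufficiency. For $V(y)=\log\log(e+y^\top Q_u y)$, write $\varphi(s)=\log\log(e+s)$, $s=\|y\|_{Q_u}^2$, $z=A_uy$, $T_1=\operatorname{tr}(Q_u\hat B\Sigma_w\hat B^\top)$ and $T_2=z^\top Q_u\hat B\Sigma_w\hat B^\top Q_u z$; the second-order Taylor term of $\mathbb{E}[V(z+\hat Bw)]$ is $\varphi'(s)T_1+2\varphi''(s)T_2$, i.e.\ up to constants $(T_1-2T_2/s)/(s\ln s)$ plus a negative correction of order $1/(s\ln^2 s)$. If $Q_u^{1/2}\hat B\Sigma_w\hat B^\top Q_u^{1/2}$ has eigenvalues $\lambda_1>\lambda_2$ and $z$ lies along the $\lambda_2$-eigendirection, then $T_1-2T_2/s=\lambda_1-\lambda_2>0$, so the one-step drift is \emph{positive} of order $1/(r^2\ln r)$ and dominates the $1/(r^2\ln^2 r)$ gain: the log-log certificate fails pointwise for anisotropic noise. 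Symmetrizing by the $A_u$-invariant Ces\`aro average cannot repair this, because \textbf{V1} is a one-step condition: a metric in which $A_u$ is an isometry and the one-step noise is simultaneously isotropic need not exist, and abandoning $A_u$-invariance of the metric perturbs $V(A_uy)-V(y)$ at order $1/\ln r$, which swamps the second-order gain. The paper's own proof faces exactly this sign issue for $V=\sqrt{\ln\|x\|_\star}$ and disposes of it through the dimension-two inequality~\eqref{eq:ineq:proof:d2} comparing $T_2/r^2$ with $T_1/2$; your proposal contains no analogous step. Your fallback, a ``triangular-array Chung--Fuchs'' recurrence for the rotated, independent but non-identically distributed increments, is likewise asserted rather than proved, and recurrence of such planar walks is not an off-the-shelf fact. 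Until one of these two legs is made rigorous, the ``if'' direction for two unit-modulus eigenvalues is incomplete.
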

\begin{proof}
    \textbf{Part A:} First, we assume that all eigenvalues of $A$ lie on the unit circle. In this case, the system dimension $n$ equals the number of eigenvalues with modulus one. We now prove the result for the case $n\leq 2$.

\textbf{Part A-I: $n \leq 2$ — Almost Sure Reachability.} We prove the result by verifying both \textbf{V1} and \textbf{V2}. For \textbf{V1}, we consider two cases based on the system size.

\textbf{Case $n = 1$.} 
In this case, $A \in \{-1, 1\}$, $B\neq 0$ and $|Ax| = |x|$ holds. Define $V(x) := \sqrt{\ln |x|}$ for $|x|\geq 1$. Since~\eqref{eq:drift} is required to hold only outside a compact set $C$, restricting the definition of $V(x)$ to the domain $|x|\geq 1$ does not affect the validity of the results. A Taylor expansion of $V(Ax + Bw)$ around $w = 0$ gives:
\begin{equation*}
\mathbb{E}[V(Ax + Bw)\mid x] \!=\! V(Ax) \!+\! \frac{1}{2} \mathbb{E}[w^2] \frac{\mathrm{d}^2 V}{\mathrm{d}x^2} + \mathbb{E}[R(x,w)\mid x],
\end{equation*}
where $R(x,w)$ is the third-order remainder term and 
\begin{equation*}
\frac{\mathrm{d}^2 V}{\mathrm{d}x^2} = \frac{-B^2}{2x^2(\ln |x|)^{1/2}} \left( \frac{1}{2\ln |x|} + 1 \right),
\quad x \ne 0.
\end{equation*}
For all $|x|>1$, this expression is negative. Since the third-order remainder satisfies $|\mathbb{E}[R(x, w)\mid x]| = O(|x|^{-3})$ under Assumption~\ref{Assum:rho1}, and noting that $\Sigma_w=\mathbb{E}[w^2]>0$ is finite, we have $\mathbb{E}[V(A x+Bw) \mid x] \leq  V(x)$ for sufficiently large $|x|$, and the drift condition \textbf{V1} holds.

\smallskip
\textbf{Case $n = 2$.} Since $A$ is diagonalizable with eigenvalues on the unit circle, there exists an invertible $P \in \mathbb{R}^{2 \times 2}$ such that $A = P R_\theta P^{-1},$ where $R_\theta$ is the rotation matrix. Define the matrix $Q := (P P^\top)^{-1}$. Then $Q \succ 0$ and $A^\top Q A = Q$. Define the norm $\|x\|_\star := \sqrt{x^\top Q x}$, so $\|Ax\|_\star = \|x\|_\star$ for all $x \in \mathbb{R}^2$. Fix any $x$, let $z := Ax$, so $\|z\|_\star = \|x\|_\star =: r > 0$. Define $V(x) := \sqrt{\ln \|x\|_\star}$ for all $x$ satisfying $\|x\|_\star\geq 1$, and then define:
\begin{equation*}
g(w) := V(z + Bw) = \sqrt{ \ln \|z + Bw\|_\star },
\end{equation*}
for any fixed $z$. Using Taylor expansion of $g(w)$ around $Bw = 0$, taking expectations and noting $\mathbb{E}[w] = 0$, we get:
\begin{equation*}
\mathbb{E}[g(w)] = g(0) + \frac{1}{2} \mathbb{E}[w^\top \nabla^2 g(0) w] + \mathbb{E}[R(w)],
\end{equation*}
with
\begin{align*}
\nabla g(0) = \frac{1}{2 \sqrt{\ln r}} \frac{B^\top Q z}{r^2},\,\,\, &\nabla^2 g(0) = \frac{1}{2r^2 \sqrt{\ln r}} \Bigg( {B^\top QB}\\
 &- \frac{2 B^\top Q z z^\top Q B}{r^2} \Bigg)- \frac{B^\top Q z z^\top Q B}{4r^4 (\ln r)^{3/2}}.
\end{align*}
Let $T_1 := \operatorname{tr}(Q B \Sigma_w B^\top)>0$ and $T_2 := z^\top Q B \Sigma_w B^\top Q z = x^\top A^\top Q B \Sigma_w B^\top Q A x$. Then,
\begin{equation}\label{eq:exp:second}
\mathbb{E}[w^\top \nabla^2 f(0) w]
= \frac{1}{ r^2\sqrt{\ln r}} \left( \frac{T_1}{2} - \frac{ T_2}{r^2} \right)
- \frac{T_2}{4 r^4(\ln r)^{3/2}}.    
\end{equation}
To analyze the sign, define $y := Q^{1/2} A x$, and $\tilde{\Sigma}_w := Q^{1/2} B \Sigma_w B^\top Q^{1/2} \succ 0$, then,
\begin{equation*}
T_2 = y^\top \tilde{\Sigma}_w y, \quad
r^2 = \|x\|_\star^2 = \|Q^{1/2} x\|^2.
\end{equation*}
Note that $\|y\|^2 = \|Q^{1/2} A x\|^2 = \|x\|_\star^2 = r^2$. Let $\lambda_1, \lambda_2 > 0$ be the eigenvalues of $\tilde{\Sigma}_w$, with orthonormal eigenvectors $v_1, v_2$. Writing $y = a v_1 + b v_2$, we have:
\begin{align}\label{eq:ineq:proof:d2}
\frac{T_2}{r^2} = & \frac{\lambda_1 a^2 + \lambda_2 b^2}{a^2 + b^2} \geq \frac{\lambda_1 + \lambda_2}{2} = \frac{1}{2} \operatorname{tr}(\tilde{\Sigma}_w)\nonumber\\ = & \frac{1}{2} \operatorname{tr}(Q B \Sigma_w B^\top) = \frac{T_1}{2}.
\end{align}
Using this, we also bound the last term in~\eqref{eq:exp:second}:
\begin{equation*}
- \frac{T_2}{4 r^4(\ln r)^{3/2}} \leq - \frac{T_1}{8r^2(\ln r)^{3/2}}.
\end{equation*}
Hence:
\begin{equation*}
\mathbb{E}[w^\top \nabla^2 f(0) w] \leq - \frac{c_1}{r^2 \sqrt{\ln r}} - \frac{c_2}{r^2 (\ln r)^{3/2}},
\end{equation*}
for some constants $c_1 \geq 0$ and $c_2 > 0$. Therefore, even when the first term vanishes, the second is strictly negative. Additionally, the third-order remainder satisfies $|\mathbb{E}[R(w)]| \leq {c_3}/{r^3}$ for some $c_3 > 0$ because $\mathbb{E}[\|w\|^3]<\infty$. Combining all terms:
\begin{equation*}
\mathbb{E}[V(Ax + Bw) \mid x] \leq V(x) - \frac{c_1}{r^2 \sqrt{\ln r}} - \frac{c_2}{r^2 (\ln r)^{3/2}} + \frac{c_3}{r^3}.
\end{equation*}
Since $r \gg (\ln r)^{3/2}$ for large $r$, the right-hand side is strictly less than $V(x)$ for all sufficiently large $r$. Thus, the drift condition holds. Note that for $n > 2$, inequality~\eqref{eq:ineq:proof:d2} no longer holds in general.

\medskip
\noindent
For \textbf{V2}, we again use the defined weighted norm and propose the following quadratic variant function
\[
U(x) := \|x\|_\star^2 - b = x^\top Q x - b,
\]
for some constant $b > 0$ depending on $G$. For $n=1$, we use the standard Euclidean norm $|x|=\|x\|_\star$ and still $\|Ax\|_\star=\|x\|_{\star}$ holds with $Q=1$. First, for the defined $V(x)=\sqrt{\ln \|x\|_\star}$, we observe that the implication 
$
    V(x) \leq r \implies U(x) \leq H(r)
$
holds with $H(r)=\exp{(2r^2)}-b$. Since $B \in \mathbb{R}^{n \times n}$ is full rank and the noise $w$ with covariance $\Sigma_w\succ 0$ has a distribution with full support in an open neighborhood of the origin. Then $B w$ also has full support in a neighborhood of the origin in $\mathbb{R}^n$. As a result, for any $x$ with $\|x\|_\star^2>b$, there exists a constant $\delta > 0$ such that $\|Ax+Bw\|_\star^2-\|x\|_\star^2\leq - \delta$ holds with positive probability for some fixed $\delta > 0$. Therefore, $U$ satisfies the variant condition \textbf{V2}, completing the proof of almost sure reachability when $n \leq 2$.

\medskip
\noindent
\textbf{Part A-II: $n > 2$ — Non-reachability.} The proof unfolds in the following key steps:
\begin{enumerate}[label=\alph*)]
    \item Establish independence of a scalar projection by exploiting independence of the noise sequence.
    \item Apply the Lindeberg--Feller central limit theorem to show that the projection converges to a Gaussian.
    \item Use the Cramér--Wold theorem to lift this to a multivariate central limit result.
    \item Bound probabilities of hitting any compact ball and apply the Borel--Cantelli lemma to show that the state diverges almost surely.
\end{enumerate}

\paragraph{Independence}
To establish non-reachability, it suffices to show that the system fails to meet the conditions for a single initial condition. In particular, we consider $x_0=0$ without loss of generality. The solution for~\eqref{eq:linear_system} becomes $x_k = \sum_{j=0}^{k-1} A^j B w_{k-1-j}$ and we define $ X_k := \frac{1}{\sqrt{k}} x_k$. Fix any $u \in \mathbb{R}^n$ and define
\begin{equation*}
Y_{k,j} := \frac{1}{\sqrt{k}} u^\top A^j B w_{k-1-j}, \quad 0 \leq j\leq k-1.
\end{equation*}
Then $u^\top X_k = \sum_{j=0}^{k-1} Y_{k,j}=:Y_{k}$. For each $k$, the entries $\{Y_{k,j}\}_{j=0}^{k-1}$ are independent as they are obtained by measurable transformations of independent random variables $w_{k-j-1}$. Moreover, they are centered at zero, i.e., $\mathbb{E}[Y_{k,j}]=0$ for all $j$. 

\paragraph{Applying Lindeberg–Feller Theorem}
We now compute the variance of $Y_{k,j}$, which is given by
\begin{equation*}
\operatorname{Var}(Y_{k,j}) = \frac{1}{k} u^\top A^j B \Sigma_w B^\top (A^j)^\top u.
\end{equation*}
Since $B$ is full rank, $B \Sigma_w B^\top \succ 0$ and $\operatorname{Var}(Y_{k,j})>0$ for all $u\neq 0$ and all $j$ and $k$. We then define the total variance of the $k$-th row as
\begin{align*}
s_k^2 :=& \sum_{j=0}^{k-1} \operatorname{Var}(Y_{k,j}) = \frac{1}{k} \sum_{j=0}^{k-1} u^\top A^j  B \Sigma_w B^\top (A^j)^\top u
 \\ =&\frac{1}{k} \sum_{j=0}^{k-1} \| B^\top (A^j)^\top u \|_{\Sigma_w}^2
\!\leq\! \frac{1}{k} \sum_{j=0}^{k-1} \|B\|_{\Sigma_w}^2 \|A^j\|_{\Sigma_w}^2 \|u\|_{\Sigma_w}^2.
\end{align*}
Since $\rho(A)=1$, all Jordan blocks of $A$ are size one, and $\Sigma_w\succ 0$ is finite, it follows that $0<c\leq s_k^2 \leq C<\infty$ (see Step 2 in the proof of Theorem~\ref{theorem:linear} for $\rho(A)>1$). We aim to apply the Lindeberg–Feller central limit theorem as stated in Theorem~\ref{thm:LFCLT}. To this end, we first check the Lindeberg condition. Observe that
\begin{equation*}
|Y_{k,j}| \leq \frac{1}{\sqrt{k}} \|u^\top A^j B\|  \|w_{k-j-1}\| \leq \frac{c_0}{\sqrt{k}} \|w_{k-j-1}\|,
\end{equation*}
for some $c_0>0$. Thus,
\begin{equation*}
|Y_{k,j}| > \varepsilon s_k \Rightarrow \|w_{k-j-1}\| > R_k := \frac{\varepsilon s_k \sqrt{k}}{c_0}\geq \frac{\varepsilon \sqrt{ck}}{c_0},
\end{equation*}
with $\lim_{k\to\infty} R_k=\infty$. Then,
\begin{equation*}
Y_{k,j}^2  \mathbf{1}_{\{|Y_{k,j}| > \varepsilon s_k\}} \leq \frac{c_0^2}{k}  \|w_{k-j-1}\|^2  \mathbf{1}_{\{ \|w_{k-j-1}\| > R_k \}}.
\end{equation*}
Taking expectation and summing,
\begin{equation*}
\sum_{j=0}^{k-1} \mathbb{E}[Y_{k,j}^2  \mathbf{1}_{\{|Y_{k,j}| > \varepsilon s_k\}}]
\!\leq \!{c_0^2}  \mathbb{E}[\|w_0\|^2  \mathbf{1}_{\{\|w_0\| > R_k\}}]\!=:L_k,
\end{equation*}
since $\mathbb{E}[\|w_i\|^2  \mathbf{1}_{\{\|w_i\| > R_k\}}]=\mathbb{E}[\|w_j\|^2  \mathbf{1}_{\{\|w_j\| > R_k\}}]$ for any $i$ and $j$, because $w_i$s are drawn from an identical distribution. One can then observe that $\lim_{k\to\infty}L_k=0$, since $\Sigma_w=\mathbb{E}[\|w_0\|^2] < \infty$ and $\lim_{k\to\infty} R_k=\infty$. Therefore, the Lindeberg condition holds. By the Lindeberg--Feller central limit theorem $Y_k\in\mathbb{R}$ converges to a normal distribution when $k\to\infty$. 

\paragraph{Applying Cramér–Wold Theorem}
In order to establish convergence result for vector $X_k\in \mathbb{R}^n$, we can apply the Cramér--Wold theorem~\ref{thm:CW}, and it reads,
$
X_k \xrightarrow{d} \mathcal{N}(0, \Gamma),
$
for some positive definite $\Gamma$, because $u^\top X_k =Y_{k}$ for any $u$. 

\paragraph{Applying Borel–Cantelli Lemma}
For any fixed ball $\mathcal{B} \subset \mathbb{R}^n$,
\begin{align*}
\mathbb{P}(x_k \in \mathcal B) &= \mathbb{P}\left(X_k \in \frac{1}{\sqrt{k}} \mathcal B\right) =\int_{\frac{1}{\sqrt{k}} \mathcal{B}} f_k(z) \, \mathrm{d}z\\ &\leq  \sup_{z \in \frac{1}{\sqrt{k}} \mathcal{B}} f_k(z) \cdot \operatorname{Vol}\left(\tfrac{1}{\sqrt{k}} \mathcal{B}\right)\leq  \frac{\bar c}{k^{n/2}}\operatorname{Vol}\left(\mathcal{B}\right),
\end{align*}
for some constant $\bar c>0$, since $f_k(z)$ converges to $\mathcal{N}(0, \Gamma)$ pointwise, where $f_k(z)$ is the density of $X_k$. Since $n > 2$, it follows that $\sum_k \mathbb{P}(x_k \in \mathcal B) < \infty$. 
Hence, by the Borel--Cantelli Lemma,
\begin{equation*}
\mathbb{P}(x_k \in \mathcal B\,, \text{infinitely often}) \!=\! 0,
\end{equation*}
for any fixed ball $\mathcal{B} \subset \mathbb{R}^n$ and then the system fails to almost surely reach any bounded open target set.

\textbf{Part B:} Then, we generalize the results to the case where $\rho(A)=1$, allowing for eigenvalues with modulus less than one. We define the unit-circle invariant subspace of $A$ as
\begin{equation*}
E_A := \operatorname{span} \left\{ \operatorname{Re}(v),\ \operatorname{Im}(v) \mid v\in \mathbb{C}^n, A v = \lambda v,\ |\lambda| = 1 \right\},
\end{equation*}
that is, $E_A\subseteq R^n$ is the real subspace spanned by the real and imaginary parts of all (possibly complex) eigenvectors associated with eigenvalues of modulus one. Note that $\dim(E_A)$ is the number of eigenvalues of $A$ such eigenvalues. Let us decompose the state space into the invariant subspace $E_A$, and its complement, which is associated with eigenvalues strictly inside the unit circle. The dynamics restricted to $E_A$ behave like a system with unit-modulus eigenvalues and Jordan blocks of size one, while the complementary dynamics decay over time due to strict stability. From part A of this proof, we know that when the dynamics are entirely supported on $E_A$ and $\dim(E_A) \leq 2$, any bounded open target set is almost surely reachable. Meanwhile, the decaying part of the dynamics (associated with $|\lambda| < 1$) always leads to reachability, as shown in Theorem~\ref{theorem:linear} for $\rho(A)<1$. Hence, when $\dim(E_A) \leq 2$, the full system exhibits almost sure reachability.

On the other hand, if $\dim(E_A) > 2$, then by Theorem~\ref{theorem:rho1:1}, the restriction of the system to $E_A$ lacks almost sure reachability, regardless of the eigenvalues with modulus less than one. Consequently, the full system does not reach any bounded open target set with probability one.
\end{proof}
Theorem~\ref{theorem:rho1:1} provides a principled generalization of classical random walk behavior to stochastic linear systems governed by~\eqref{eq:linear_system} and it is aligned with classical recurrence results for random walks when $A = I$ (e.g., the Chung–Fuchs Theorem~\cite{chung1951distribution}). In addition, the framework provides explicit certificates as developed in~\cite{majumdar2024necessary}: a logarithmic form for the drift function and a quadratic form for the variant function, which together verify almost sure reachability of linear systems with $\rho(A)=1$, under the assumptions that $A$ has at most two eigenvalues with modulus one, all corresponding Jordan blocks are of size one, and $B$ is full rank. 

While standard random walks are sums of i.i.d. variables amenable to the classical central limit theorem, the presence of the matrix $A$ introduces temporal dependence. To address this, we invoke the Lindeberg–Feller central limit theorem for triangular arrays and apply the Cramér–Wold theorem to lift convergence to the multivariate setting. This establishes asymptotic Gaussian behavior despite the system’s lack of independence or stationarity.

In Theorem~\ref{theorem:rho1:1}, we assume $B \in \mathbb{R}^{n \times n}$ is full rank, as stated in Assumption~\ref{Assum:rho1}, ensuring that noise affects all directions in the state space. This non-degeneracy is essential for both the central limit analysis and reachability guarantees.

\begin{Example}
Consider~\eqref{eq:linear_system} with $A = I_2$ and $B = [1,1]^\top$. The noise acts only along $[1,1]^\top$, a one-dimensional subspace. Let $y = P x$ with $P \in \mathbb{R}^{2 \times 2}$ be defined by $P_{1,1} = 1$, $P_{1,2} = 0$, $P_{2,1} = -1$, and $P_{2,2} = 1$, so that in the new coordinates the system becomes $y_{k+1} = y_k + [1,0]^\top w_k$. The second coordinate evolves deterministically. In this transformed system, any initial condition of the form $y_0 = [0\,,\alpha]^\top$ remains of the form $y_k=[\beta_k\,, \alpha]^\top$ for all $k$. By choosing $\alpha$ large enough so that $\alpha>\sup_{x\in G}\|x\|$ for any bounded open target set $G$ containing the origin, it can be seen that $y_0\notin G$ because $\|y_0\|=\alpha>\sup_{x\in G}\|x\|$. One then ensures that $y_k\notin G$ for all $k$, because $\|y_k\|\geq \|y_0\|$ for all $k$. Therefore, reachability fails in this degenerate case, highlighting the necessity of the full-rank assumption on $B$.
\end{Example}

The divergence result $\|x_k\| \to \infty$ also relies critically on $B$ being full rank. Without this, some directions may remain unaffected, and the state norm may remain bounded.

\begin{Example}
Consider~\eqref{eq:linear_system} with $A = I_3$ and $B = \operatorname{diag}(1, 1, 0)$. The first two coordinates behave as a two-dimensional random walk, while the third remains constant. Since the norm of a two-dimensional random walk does not diverge almost surely, the overall state norm $\|x_k\|$ also remains bounded.
\end{Example}

The reachability framework based on Theorem~\ref{thm:nece_suff} identifies two distinct failure modes from Theorems~\ref{theorem:linear} and~\ref{theorem:rho1:1}. The first is structural: if $\rho(A)>1$ or $\rho(A)=1$ with $A$ containing Jordan blocks of size greater than one or more than two unit-modulus eigenvalues, divergence occurs with positive probability and no drift function satisfying \textbf{V1} can exist. The second is due to insufficient excitation: when $B$ is not full rank, the noise fails to affect all directions, preventing consistent decrease of the variant function and violating \textbf{V2}. Thus, reachability may fail either due to divergence or noise degeneracy.

\section{Conclusion}
\label{sec:conclusion}
This work advances the theoretical understanding of almost sure reachability in discrete-time stochastic systems. Building on the recently established drift and variant certificates, which provide a necessary and sufficient condition, we investigated when such certificates can actually be constructed within commonly used function classes. Our results show that restricting to fixed templates, such as polynomial or quadratic functions, may lead to a loss of completeness: we gave explicit examples of polynomial systems that do not admit polynomial certificates, and linear systems that fail to admit quadratic ones, despite satisfying almost sure reachability. We then focused on linear systems with additive stochastic disturbances and developed a complete structural characterization for almost sure reachability based on system properties. We showed that reachability can be determined by spectral properties of the system matrix, such as the spectral radius, Jordan block structure, and system dimension. In particular, we identified the precise conditions under which quadratic certificates are sufficient and constructed explicit non-quadratic certificates using logarithmic functions when needed. These results generalize classical random walk theory to more complex stochastic dynamical systems.

Beyond highlighting the limitations of polynomial-template-based approaches, this work lays the foundation for future research in several directions. One avenue is to develop computational methods for constructing certificates with formal guarantees, particularly in the polynomial setting. Another is to extend the framework toward control synthesis, where the goal is to design inputs that ensure almost sure reachability to target sets. Finally, a broader and more challenging direction is to seek structural characterizations for almost sure reachability in nonlinear systems, going beyond the linear regime explored in this paper.

 \section*{Acknowledgment}
 The authors would like to acknowledge S.V. Ramesh\,\orcidlink{0009-0006-5187-5415} at MPI-SWS for the discussions in the early stages of developing the results.

\bibliographystyle{IEEEtran}
\bibliography{ASReachLinear}
\end{document}